\definecolor{gray75}{gray}{0.75}
\newcommand{\hsp}{\hspace{20pt}}
\titleformat{\chapter}[hang]{\Huge\bfseries}{\thechapter\hsp\textcolor{gray75}{$|$}\hsp}{0pt}{\Huge\bfseries}
\newcommand{\qed}{\hfill \ensuremath{\Box}}
\newenvironment{proof}[1][Proof]{\begin{trivlist}
\item[\hskip \labelsep {\bfseries #1}]}{\end{trivlist}}
\newtheorem{theorem}{Theorem}[chapter]
\newtheorem{lemma}[theorem]{Lemma}
\newtheorem{proposition}[theorem]{Proposition}
\newtheorem{corollary}[theorem]{Corollary}
\newtheorem{definition}{Definition}[chapter]
\newcommand{\backrefnotcitedstring}{\relax} % (Not cited.)
\newcommand{\backrefcitedsinglestring}[1]{(Cited on page~#1.)}
\newcommand{\backrefcitedmultistring}[1]{(Cited on pages~#1.)}
\renewcommand*{\backref}[1]{}  % disable standard
\renewcommand*{\backrefalt}[4]{% detailed backref
\ifcase #1 
\backrefnotcitedstring
\or
\backrefcitedsinglestring{#2}
\else
\backrefcitedmultistring{#2}
\fi}
\begin{document}

%%%%%begin ILLC titelpagina
%\documentclass[titlepage]{report}
%\usepackage{illcmolthesis}
%\begin{document}
\title{{Classical Encryption and Authentication under Quantum Attacks}}
\author{Maria Velema}
\date{July 12, 2013}
%\birthdate{April 19, 1988}
%\birthplace{Wageningen, The Netherlands}
%\defensedate{June 28, 2013}
%\supervisor{Dr Christian Schaffner}
%\committeemember{Dr Alexandru Baltag}
%\committeemember{Dr Serge Fehr}
%\committeemember{Dr Christian Schaffner}
%\committeemember{Prof Dr Ronald de Wolf}
%\maketitle
%\end{document}
%%%%%%einde ILLC gedoe

%\pagestyle{headings}
\maketitle
\begin{abstract}
Post-quantum cryptography studies the security of classical, i.e.\ non-quantum cryptographic protocols against quantum attacks. 
Until recently, the considered adversaries were assumed to use quantum computers and behave like classical adversaries otherwise.
A more conservative approach is to assume that also the communication between the honest parties and the adversary is (partly) quantum.
We discuss several options to define secure encryption and authentication against these stronger adversaries who can carry out \emph{superposition attacks}. We re-prove a recent result of Boneh and Zhandry, stating that 
a uniformly random function (and hence also a quantum-secure pseudorandom function) can serve as a  message-authentication code which is secure, even if the adversary 
can evaluate this function in superposition.
\end{abstract}
%\newpage				%voor printen
%\thispagestyle{empty}	%voor printen
%\hfill					%voor printen
%\newpage
\pagenumbering{roman}
%\newpage				%voor printen
%\thispagestyle{empty}	%voor printen
%\hfill					%voor printen
\section*{Acknowledgements}
I would like to thank my supervisor Christian Schaffner for working together after making me interested with his course on cryptography.
I am grateful for his patience and didactic guidance through the complex proofs during my project. His help and confidence made this result possible.\\
I want to thank Ronald de Wolf for noticing the relation between the theorem of Boneh and Zhandry and Farhi's result. Apart from playing this crucial role, he made some very important comments. 
I also want to thank the other members of the thesis committee, Alexandru Baltag and Serge Fehr, for taking the time to read my thesis and for correcting some details.\\
I thank both Gerrit and Frank for their useful (and funny) comments from a different point of view.
Finally, I want to thank Nicole and Peter-Paul for studying together at the kitchen table.

%\newpage				%voor printen
%\thispagestyle{empty}	%voor printen
%\hfill					%voor printen
\tableofcontents

%\newpage
%\newpage				%voor printen
%\thispagestyle{empty}	%voor printen
%\hfill					%voor printen
\newpage
\pagenumbering{arabic}

\chapter{Introduction}
\label{intro}
To be prepared for a future in which \emph{quantum computers} have significant computing power but are not available to all users of cryptographic protocols, we need to study the security of classical (i.e.\ non-quantum) schemes against \emph{quantum attacks}. 
Usually in the field of \emph{post-quantum cryptography}, it is assumed that the honest parties are fully classical and the attackers or adversaries use quantum computers. This scenario can be generalized by dropping the assumption that the honest parties can compute and communicate only classically.  An important argument to do this is the fact that the world, as physical theories describe it nowadays, is not classical but quantum; although some devices are designed to work classically, they may leak quantum information under certain conditions. 
It is also conceivable that honest parties do use quantum computing, to speed up some computations, but still use classical schemes instead of quantum cryptography only. Since classical computing is a special form of quantum computing, considering quantum implementations of classical schemes captures a broader class of scenarios.

Notions of (classical) security 
are defined by means of a game between a \emph{challenger} and an \emph{adversary}. Different options to model the abilities of the adversaries lead to various security definitions. 
Damg{\aa}rd et al.~\cite{DFBS11} consider `superposition attacks' on some cryptographic protocols where  communication between the honest players and the adversaries can be quantum. The protocols they discuss are secret sharing, zero-knowledge protocols and multiparty computation.
Boneh and Zhandry~\cite{BZ13b} define quantum security games that model the case in which (part of) the communication is quantum, for both private-key and public-key encryption and authentication. 
They build upon earlier work in which 
pseudorandom functions are queried in superposition~\cite{Zha12b}.
Their research extends the ideas of the paper on quantum accessible random oracles~\cite{BDFLSZ11}.
As Damg{\aa}rd et al.\ 
point out, there is an important difference between superposition attacks on functions that are implemented by the adversary herself and superposition attacks on protocols 
run by the honest party. It seems to be similar, because both attacks are modelled in terms of oracles that may be queried in superposition. In the second case however, the adversary tricks the honest party to act like an oracle and some communication takes place, whereas in the first case the oracle is a subroutine of the algorithm that is modelled as black box. 

It is not always clear how (unintended) quantum communication between the adversary and the honest parties should be modelled. In particular, is it 
plausible that the adversary is able to create quantum entanglement with the sender? Superposition queries to an oracle are usually modelled 
so that it results in a state including the query and the answer. To end up in a similar state using quantum communication with the honest party, the adversary has to create a register that is entangled with the question. While the honest party is answering, he shares entanglement with the adversary.

In this thesis, we discuss the security definitions Boneh and Zhandry proposed in~\cite{BZ13b} and we suggest an alternative definition for quantum secure encryption, which uses apart from superposition oracle access also quantum communication without entanglement. We leave open the question whether this definition is equivalent to that of Boneh and Zhandry or strictly stronger, and whether it is feasible at all (i.e.\ whether there exist an encryption scheme that is secure in this sense). 
The technical part of the thesis is focussed on Message Authentication Codes (MACs) from pseudorandom functions under superposition attacks. 
We re-prove the theorem proved by Boneh and Zhandry~\cite{BZ13a} stating that a uniformly random function, and therefore also a pseudorandom function, can serve as a MAC which is secure, even if the adversary has superposition oracle access to this function. Formulated as a game it says that, except with negligible probability, it is impossible for any adversary to output $q+1$ input-output pairs of the oracle function after making $q$ superposition queries. 
Our approach to prove the theorem is based on the 
\emph{quantum polynomial method}~\cite{Bea97}. We follow the outline of the proof of an earlier result from quantum computing and generalize this result.

Chapter \ref{notation} gives a list of some basic notation and definitions. 
Chapter \ref{crypto} is a general introduction to cryptography and Chapter \ref{quantum} introduces some basic concepts of quantum computing and some more specific constructions that we will use in Chapter \ref{main}. 
In Chapter \ref{qsecurity} we discuss how to model quantum attacks and Chapter \ref{main} consists of the new proof of Boneh and Zhandry's theorem and some applications of it. Chapter \ref{conclusion} concludes and points out some interesting questions for future research.

\newpage
\chapter{Preliminaries}
\label{notation}
In this chapter we list some mathematical notation we use, and we define some basic concepts from probability theory and computational complexity.
We use the following notations for $n\in \mathbb{N}$:
\begin{itemize}
\item The set of all bit strings of length $n$ is denoted by $\{0,1\}^n$. \\
$\{0,1\}^*$ denotes the bit strings of any finite length.
\item $[n]$ is the set of natural numbers from 0 to $n-1$. $[n]^*$ denotes the numbers from 1 to $n-1$ (or $[n]\backslash \{0\}$).
\item $Y^X$ is the set of functions $f: X \rightarrow  Y$.
\item
The \emph{bitwise XOR} is denoted as $\oplus$ and is the operation that adds two bits modulo 2. 
Bitwise means that the operation is applied on bits in the same position. For example: $01 \oplus 11 = 10$.
\end{itemize}
\begin{definition}
A \emph{weight assignment on a set S} is a function $D: S \rightarrow \mathbb{R}$ that assigns a value to each element in the set. We can assume that the values always sum up to one (if not, we scale the values).
A \emph{probability distribution} is a weight assignment of which all values are  
non-negative and thus can be seen as probabilities. Note that in some cases we implicitly assume that all outcomes that are assigned with probability zero are excluded from the set.

When a distribution is sampled, that is: an element of the set is randomly chosen according to the distribution, we write $r \leftarrow S$. If the distribution is not explicitly mentioned, the uniform distribution is assumed. In this thesis we only consider distributions over finite sets.\\
\end{definition}

\begin{definition}
A function $\varepsilon : \mathbb{N} \rightarrow \mathbb{R}_{>0}$ is \emph{negligible} 
if $\epsilon$ decreases faster than any inverse polynomial:
$$\forall c>0 \  \exists N_c \in \mathbb{N} \text{ such that } \forall n> N_c \text{ it holds that } \varepsilon(n) < \frac{1}{ n^c}$$
\end{definition}
Equivalent to Turing Machines, every computation can be modelled by a (uniform family of) Boolean circuits of certain basic Boolean operations called \emph{gates}. 
\begin{definition} 
For every $n\in \mathbb{N}$, an \emph{$n$-input, single-output Boolean circuit} is a directed graph with $n$ \emph{sources} (vertices with no incoming edges) and one \emph{sink} (vertex with no outgoing edges). All nonsource gates are labelled with a logical operation OR, AND or NOT. The size of a circuit is the number of vertices in it. (Arora and Barak~\cite{AB09})
\end{definition}
If an algorithm has access to some oracle, then the circuit contains query gates in addition. This means that querying the oracle costs one time step.

Sets of natural numbers and formal decision problems 
can be compared by their complexity: how `difficult' it is for algorithms to compute respectively solve them. 
To which complexity class(es) a problem belongs, depends on the running time or memory space it takes an optimal algorithm to solve the problem and whether it uses randomness. 
The two best known complexity classes are the following:
\begin{itemize}
\item $\mathsf{P}$ is the class of problems that can be solved in polynomial time or equivalently with a polynomially-sized circuit. 
\item $\mathsf{NP}$ is the class of problems for which a \emph{witness} that shows the positive answer, can be verified in polynomial time. For example, the problem whether a model exists satisfying a formula $\phi$ is in $\mathsf{NP}$ because we can verify in polynomial time whether a given model satisfies $\phi$.
\end{itemize}
It has not been proved, but it is generally believed and an underlying assumption for a lot of theorems  (especially in cryptography) that  $\mathsf{P} \neq \mathsf{NP}$.

\newpage
\newcommand{\Gen}{\mathsf{Gen}}
\newcommand{\Enc}{\mathsf{Enc}}
\newcommand{\Dec}{\mathsf{Dec}}
\newcommand{\Mac}{\mathsf{Mac}}
\newcommand{\Vfy}{\mathsf{Vrfy}}
\newcommand{\Vrfy}{\mathsf{Vrfy}}
\newcommand{\negl}{\mathtt{negl}}
\newcommand{\her}{her}

\chapter{Cryptography}
\label{crypto}
This chapter provides the background in cryptography that is needed to understand Chapters \ref{qsecurity} and \ref{main}. The first section introduces the research field. In Section \ref{math} we formally define  encryption schemes and message authentication codes and the traditional security notions for these two concepts. Section \ref{pseudo} is about the more abstract notion of \emph{pseudorandomness}, that has a lot of applications in different cryptographic constructions.

\section{Introduction}
Cryptography literally means `secret writing' and has already been used in wars and diplomatic affairs from the time of Julius Caesar. 
There are two main goals in cryptography: 
\begin{itemize}
\item secrecy: an adversary who intercepts an encrypted message should not be able to gain any information about the content of the message from it.
\item authentication: it should be certain for the receiver by whom the message is sent and that the received message is exactly the one that is sent.
\end{itemize}
Assuming that the parties (often called Alice and Bob) communicate by sending data over a public channel, the only way to achieve the first goal perfectly, is  to send an encrypted version of the message (called \emph{ciphertext}) that is statistically independent from the original message, but which can be decrypted using a \emph{secret key}. This key is arranged in advance between the sender and the receiver. Statistically independent means that for all $m$ and $c$ the probability that the original message is $m$ is equal to the probability that the message is $m$, \textit{given} the cipher text is $c$.

This definition is called information-theoretic security and is only possible if the secret key is of the same size as the message and used only once, as was shown by Shannon~\cite{Sha49}. 
For example, if the length of the key is smaller than the length of the messages, it is possible to carry out a brute-force attack. The decryption algorithm is run repeatedly on the ciphertext with each possible key. Because there are less possible keys than possible messages, the outcomes of this experiment are not distributed over all messages and yield some information about the original message.
An information-theoretically secure scheme can be used in practice, but it can be difficult to secretly arrange a key each time, so it is not useful for all purposes.
In practice however, an adversary cannot always utilize the information hidden in an `insecure' ciphertext because it may take too much time to run the needed computations. 
 
Cryptographers make some assumptions about the adversaries who potentially intercept communication. By Kerckhoffs' principle, the adversary is assumed to know the fact that the sender wants to send the receiver some text and also what kind of encryption they use. 
The key is the only thing that is not known by the adversary.
As soon as a key is used multiple times, we can model the power of the adversary in several different ways. In any case an eavesdropper may use all previous ciphertexts in its computation.   Historical examples show that we may want to assume that the adversary can trick the sender into encrypting a message (or part of a message) of her choice. We call this chosen-plaintext attacks (CPA). We can go even further and assume that the adversary can trick the honest parties in encrypting \textit{and} decrypting texts of her choice: chosen-ciphertext attacks (CCA). Thus, what we call a (computationally-) secure encryption scheme  depends on the kind of attack we want to be secure against, but also on the computational power of the adversary in our model and the time we want to keep our message secret. For some applications it is not a problem if the adversary knows the message after ten years.

To be more general and be able to compare different schemes, it is convenient to use an \emph{asymptotic} approach as in complexity theory. The running time of an algorithm is treated as function of the  input length and we are interested in the asymptotic behaviour of this function, in particular whether it grows exponentially. All computations in the model should run in polynomial time. Each computation gets as additional input a string of ones whose length is called the security parameter $n$. Often it is the same as the length of the input, but it allows algorithms that not necessarily have any input at all to run for some reasonable time. The running time of the functions that are part of the scheme is then bounded by some fixed polynomial in this security parameter. We only consider adversaries that run in polynomial time in their input, which is the ciphertext, as adversaries winning the game in super-polynomial time are not considered a threat.

CPA or CCA secure encryption schemes must be randomized (apart from the randomness used for generating the key). If a message would every time be encrypted as the same ciphertext, it is easy to detect when a message is sent twice, which gives a lot of information in some cases. A random string, denoted  by $r$, can be used to choose one of a set of encryptions of the message.

There are two different types of encryption: \emph{private-key} (or symmetric) encryption and \emph{public-key} (or asymmetric) encryption. Private-key encryption is quite intuitive: two parties that want to communicate share a secret key in advance, for example when they meet physically. Public-key encryption works as follows: the receiver publicly  announces an encryption key which works in combination with a secret decryption key (known only to the receiver). This is a very useful and non-trivial concept, but not the main subject of this thesis.
\\

\noindent
The second goal of cryptography is \emph{authentication}. We want to detect 
when a message we receive (from a trusted person) is tampered with or sent by an adversary. Roughly speaking,  Message Authentication Codes (MAC) are the digital equivalence of handwritten signatures. However, the word \emph{digital signature} is reserved for the public-key version, because that concept is even more comparable to handwritten signatures which can be verified by everyone.\\

\section{Mathematical Security}
\label{math}
Where historical ciphers were just developed in a smart way which was hopefully not thought of by the adversary, modern cryptography deals with commonly known schemes where only the key is secret according to Kerckhoffs' principle. A scheme is called secure if we can prove mathematically that whatever strategy an adversary has, she cannot find out anything about the message using the information she has access to.\\

\subsection{Encryption}
Before we can define formally the security of encryption schemes, we need to define the scheme itself.
\begin{definition}
\label{encscheme}
A \emph{private-key encryption scheme} is a tuple $(\Gen, \Enc, \Dec)$ where:
\begin{itemize}
\item $\Gen$ is the key-generation algorithm having as input the security parameter $1^n$ (the bit string consisting of $n$ consecutive ones) on which it outputs a key $k$. Often this is just a sample from the uniform distribution on the set of all strings of a certain size.
\item $\Enc$ is the encryption function which, on input $k$ and a message $m \in \{0,1\}^*$, outputs a ciphertext $c$.
\item $\Dec$ is the decryption algorithm which takes as input $k$ and $c$ and outputs $m'$. 
\end{itemize}
The encryption scheme is \emph{correct} if
$\Dec$ is deterministic and $\Dec(\Enc(m))=m$ for all messages $m$.

\end{definition}
We define \emph{information-theoretic} (or \emph{perfect}) security to illustrate what relaxations are made in the definition of computational security, compared to the semantic definition of security we want to achieve.
We start with the most intuitive definition saying that an adversary can not learn anything from the ciphertext that was not already known.
\begin{definition}
An encryption scheme $(\Gen, \Enc, \Dec)$ is \emph{perfectly secret} if for every probability distribution over the message space, and every possible ciphertext we have:
$$\forall m,c: \ \Pr[M=m|C=c]=\Pr[M=m]$$
where variable $M$  is the message sent hidden in the ciphertext variable $C$, $m$ a particular message and $c$ a particular ciphertext.
\end{definition}
An equivalent way to define secrecy is to set up a game between a \emph{challenger} playing the honest parties and the adversary, which is won by the adversary if she learns more from the ciphertext than is allowed. 
\begin{definition}
An encryption scheme $(\Gen, \Enc, \Dec)$ is perfectly secret if all adversaries win the following game with probability $\frac{1}{2}$.\\
\textit{Eavesdropping game}:
\begin{enumerate}
\item The adversary $\mathcal{A}$ outputs a pair of messages $m_0,m_1 \in \{0,1\}^*$.
\item A random key is generated by running $\Gen$ and a random bit $b$ is chosen by the challenger. $c=
\Enc_k(m_b)$ is sent to $\mathcal{A}$.
\item $\mathcal{A} $ outputs a bit $b'$ and wins if $b=b'$.
\end{enumerate}
\end{definition}
It is easy to see that this notion is equivalent to the first definition. We will not work out the proof here and refer to \cite{KL07}. 
This game is referred to as $PrivK^{eav}_{\mathcal{A},\Pi}$ and is defined to be 1 if $\mathcal{A}$ wins and 0 otherwise.
We can alternatively write the adversary's final output in this game as output of the algorithm $\mathcal{A}$ on input $\Enc_k(m_b)$. The encryption scheme is perfectly secret if:
$$\Pr[\mathcal{A}(\Enc_k(m_b))=b]=\frac{1}{2}$$ $$ \Leftrightarrow \Pr[\mathcal{A}(\Enc_k(m_b))=0]- \Pr[\mathcal{A}(\Enc_k(m_b))=1] = 0$$
The probability is taken over all randomness used by the challenger, in the encryption scheme or by the adversary.
Because perfect secrecy may be too strong to be manageable and not needed for most applications,
we define several notions of a second, more relaxed type of security. We make the reasonable assumption that all agents are computationally bounded and we do not require security to last infinitely long. Hereby we arrive in the field of computational-complexity theory.
The semantic security definition with intuitive meaning in the real world, is followed by an equivalent definition in terms of indistinguishability. 

The semantic definition of computational security we state is a slightly simplified version of the more general definition in the book by Katz and Lindell \cite{KL07}. Following these ideas, stronger definitions can be made to cover adversaries that are more powerful than just eavesdroppers.
\begin{definition}
A private-key encryption scheme $\Pi = (\Gen, \Enc, \Dec)$ is \emph{semantically secure in the presence of an eavesdropper} if for any 
randomized polynomial-time adversary $\mathcal{A}$ there exists a 
randomized polynomial-time algorithm $\mathcal{A'}$ such that for all efficiently-sampleable distributions $D$ and all polynomial-time computable functions $f$, there exists a negligible function $\negl$ such that
$$\left| \Pr [\mathcal{A}(1^n, \Enc(m))=f(m)] - \Pr [\mathcal{A'}(1^n) =f(m)] \right| \leq \negl(n)$$
where $m$ is chosen according to $D$ and the probabilities are taken over the  randomness used in the encryption scheme, by the challenger and by the adversary.
\end{definition}
The intuition behind this definition is that an encryption is secure if no adversary can in practice (i.e.\ with polynomially bounded computation power) compute any partial information about the message $(f(m))$ using the eavesdropped encryption of the message, except with negligible probability.  

Because this definition is not very practical to work with, we use the following equivalent definition from which we define stronger security notions as well. For the proof of the stated equivalence we refer to the textbook by Oded Goldreich \cite{Gol04}.

\begin{definition}
A private-key encryption scheme $\Pi = (\Gen, \Enc, \Dec)$ is \emph{secure in the presence of an eavesdropper} if for any 
randomized polynomial-time adversary $\mathcal{A}$ there exists a negligible function $\negl$ such that
$$\Pr[PrivK^{eav}_{\mathcal{A},\Pi} (n) =1] \leq \frac{1}{2} + \negl(n)$$
where the random variable $PrivK^{eav}_{\mathcal{A},\Pi} (n)$ is defined as the following game:
\begin{enumerate}
\item $\mathcal{A}$ gets input $1^n$ and outputs a pair of messages $(m_0,m_1)$ of the same length.
\item $\Gen(1^n)$ generates a key $k$ and a random bit $b$ is chosen by the challenger. $\Enc_k(m_b) = c$ is given to $\mathcal{A}$.
\item If $\mathcal{A}$ outputs the bit $b$ she succeeds and the output of the game is 1, the output is 0 otherwise.
\end{enumerate}
Again, the probability is over all randomness.
\end{definition}
Real-life \emph{breaks} (i.e.\ successful attacks)  and new insights have led to refinements of the assumptions regarding the adversary. It is easy to adapt the game in the security definition in terms of indistinguishability by changing the abilities of the adversary and the information sources she has access to. We keep the semantic security definition in mind.

\begin{definition}
\label{cpa}
A private key encryption scheme $\Pi = (\Gen, \Enc, \Dec)$ has \emph{indistinguishable encryptions under a chosen-plaintext attack (or is CPA-secure)} if for any randomized polynomial-time adversary $\mathcal{A}$ there exists a negligible function $\negl$ such that
$$\Pr[PrivK^{cpa}_{\mathcal{A},\Pi} (n) =1] \leq \frac{1}{2} + \negl(n)$$
where the random variable $PrivK^{cpa}_{\mathcal{A},\Pi} (n)$ is defined as the following:
\begin{enumerate}
\item $\Gen(1^n)$ generates a key $k$
\item $\mathcal{A}$ gets input $1^n$, has oracle access to $\Enc_k$ and outputs a pair of messages $(m_0,m_1)$ of the same length.
\item A random bit $b$ is chosen. $\Enc_k(m_b) = c$ is given to $\mathcal{A}$.
\item If $\mathcal{A}$, which still has oracle access to $\Enc_k$, outputs the bit $b$ then it succeeds and the output of the game is 1, otherwise the output is 0.
\end{enumerate}
The output of $\mathcal{A}$ in this game can also be written as $\mathcal{A}^{\Enc_k}(\Enc_k(m_b))$ and the success probability is $\Pr[\mathcal{A}^{\Enc_k}(\Enc_k(m_b))=b]$ where the probability is taken over the possible keys and the randomness of $\Enc$, $b$ and $\mathcal{A}$. The superscript $\Enc_k$ means that the algorithm $\mathcal{A}$ has oracle access to $\Enc_k$.
\end{definition}
For chosen-ciphertext security we define the same game with the only addition that the adversary has oracle access to the decryption function. The adversary is not allowed to query the decryption oracle on the challenge ciphertext $c$ since this would trivially result in a break.
\begin{definition}CCA indistinguishability game $PrivK^{cca}_{\mathcal{A},\Pi} (n)$:
\begin{enumerate}
\item $\Gen(1^n)$ generates a key $k$.
\item $\mathcal{A}$ gets input $1^n$, has oracle access to $\Enc_k$ and $\Dec_k$ and outputs a pair of messages $(m_0,m_1)$ of the same length.
\item A random bit $b$ is chosen. $\Enc_k(m_b) = c$ is given to $\mathcal{A}$.
\item $\mathcal{A}$ still has access to the oracles but is not allowed to query the decryption oracle on $c$. If $\mathcal{A}$ outputs the bit $b$ then she succeeds and the output of the game is 1, otherwise the output is 0.
\end{enumerate}

\end{definition}

%%%%%MAC
\subsection{Authentication}
We want to have formal security definitions for authentication as well. We start with the definition of a MAC. 
\begin{definition}
A \emph{Messsage Authentication Code (MAC)} is a triple of randomized polynomial-time algorithms $(\Gen, \Mac, \Vfy)$:
\begin{itemize}
\item $\Gen$ outputs a key $k$ on input $1^n$, $|k|\geq n$ (the security parameter).
\item $\Mac$ takes input $m \in \{0,1\}^*$ and $k$ and outputs a tag $t$.
\item $\Vfy$ checks validity of the tag: on input $m$, $k$ and $t$ it outputs 0 or 1. 
\end{itemize}
%A tag $t$ is \emph{valid} on message $m$ under key $k$ if $\Mac_k(m)=t$ (for some randomness $r$ if $\Mac$ is randomized)
The authentication scheme is \emph{correct} 
if indeed $\Mac_k(m)=t$ (for some randomness $r$) when $\Vfy_k(m,t) =1$ and $\Mac_k(m)\neq t$ (for all $r$) when $\Vfy_k(m,t) =0$.

\end{definition}
The definition of secure MACs needs to formalize the requirement that a message from a trusted sender cannot be changed by an adversary without being noticed by the receiver. This requirement immediately covers infeasibility   of sending a message as if it is sent by this particular trusted party. For MACs there is only one definition used in general (and variations of it), which is security against the adaptive chosen-message attack. 

\begin{definition}
A MAC is \emph{existentially unforgeable under adaptive chosen-message attacks} if for any adversary the probability that the adversary wins the following game is negligible.\\
\emph{MAC-forge game}:
\begin{enumerate}
\item  $\Gen(1^n)$ generates a random key.
\item $\mathcal{A}$ is given input $1^n$ and oracle access to $\Mac_k$. $\mathcal{A}$ wins if she outputs a pair $(m,t)$ for which it holds that $m$ was not queried and $\Vrfy_k(m,t)=1$.
\end{enumerate}
\end{definition}
We can equivalently say that the adversary wins if she outputs $k+1$ distinct valid message-tag pairs where $k$ is the number of queries made. We will use this formulation to define quantum security in Chapter \ref{qsecurity}.

Note that it is possible for an adversary to send a valid message-tag pair of a MAC that is secure, according to this definition, 
 by resending a pair that was sent by a trusted sender. A MAC can be protected against these \emph{replay attacks} by requiring a sequence number or a time-stamp as part of  each message, but we will not discuss this further.
Roughly the same security definition works for public-key authentication: signature schemes. The difference in the game is that everyone, including the adversary, can check the validity of a signature using the public key.

\section{Pseudorandomness}
\label{pseudo}
A cryptographic scheme is secure against a certain class of adversaries if ciphertexts  `look random' to all adversaries. This is exactly what pseudorandomness achieves: making much pseudorandom output from little random input.   Apart from that, pseudorandomness can be used as alternative for real randomness. 
One of the applications, and one we study in this thesis, is the almost trivial construction of MACs  from \emph{pseudorandom functions} (PRF).  PRF's can be made from \emph{pseudorandom generators}, which in their turn can be build using \emph{one-way functions}.

\begin{definition} 
A \emph{pseudorandom function} is an efficient length-preserving keyed function $f \colon \{0,1\}^* \times \{0,1\}^* \rightarrow \{0,1\}^*$ such that for all 
randomized polynomial-time algorithms $\mathcal{D}$, there exists a negligible function $\negl$ such that:
$$\left| \Pr[\mathcal{D}^{f_k}(1^n)=1]- \Pr[\mathcal{D}^{g}(1^n)=1] \right| \leq \negl(n)$$
where $k$ and the function $g$ are chosen uniformly at random. 
\end{definition}

\begin{definition}
A \emph{pseudorandom generator} is an efficient deterministic algorithm $G$ which, on input $s\in \{0,1\}^n$, outputs a string of length $l(n)$ such that:
\begin{itemize}
\item The \emph{expansion factor} $l(n)$ is a polynomial in $n$ with $\forall n \ l(n)>n$.
\item No randomized polynomial-time algorithm $\mathcal{D}$ can distinguish $G(s)$ from a uniformly random string of the same length:
$$|\Pr[\mathcal{D}(r)=1]-\Pr[\mathcal{D}(G(s))=1]|\leq \negl(n)$$  
where $s$ and $r$ are uniformly random strings of respective sizes $n$ and $l(n)$, and $\negl$ is some negligible function (that may depend on $\mathcal{D}$).
\end{itemize}
\end{definition}

\begin{definition}
A \emph{one-way function} is a function $f\colon \{0,1\}^* \rightarrow \{0,1\}^*$ which is easy to compute and hard to invert:
\begin{itemize}
\item There exists a polynomial-time algorithm $M_f$ such that $ \forall x \ M_f(x)=f(x) $.
\item Any randomized polynomial-time $\mathcal{A}$ wins the following \emph{inverting (or collision finding) game} only with negligible probability:\\
	$\mathcal{A}$ is given  $1^n$ and $f(x)$ of a random $x \leftarrow \{0,1\}^n$ and outputs $x'$. $\mathcal{A}$ wins if $f(x')=f(x)$
\end{itemize}
\end{definition}

\newpage
\newcommand{\xv}{\mathbf{x}}
\newcommand{\M}{m}
\newcommand{\bfn}{\mathbf{b}}

\chapter{Quantum Computing}
\label{quantum}
This chapter provides some background on quantum computing. Section \ref{qcintro} is an introduction to  quantum computers. Section \ref{tech} is about the formal definition and the commonly used notation to write down computations concisely. Section \ref{secepr} describes the famous EPR-pair \cite{EPR} illustrating the phenomenon \emph{entanglement}. In Section \ref{querymodel} we introduce the \emph{query model} in which a function is modelled as black box or \emph{oracle} instead of a subroutine of an algorithm. The algorithm may query a \emph{quantum oracle} on any superposition of inputs. We prove the equivalence of two different implementations of oracle queries to any (non-boolean) function. This shows that the implementation we work with in Chapter \ref{main} is equivalent to the implementation Boneh and Zhandry use~\cite{BZ13a}.

\section{Introduction}
\label{qcintro}
To describe small particles one should use the theory of quantum mechanics. Although modern elements of computer chips are too small to behave classically, the unwanted quantum mechanical behaviour is `corrected' because we want computers to work like we expect and according to which classical software is designed. Theoretically, it is also possible to exploit the quantum effects and build a computer based on these phenomena instead of classical logic. In practice it is shown to be possible to build a \textit{quantum computer} consisting of very few \textit{quantum bits} (the analogue of a classical bit) and researchers are working on other methods that are hopefully more scalable. It is not clear whether it is possible at all or when a quantum computer will be built with computing power comparable to state-of-art classical computers, but when the time comes, we can no longer expect our adversaries to use classical computers only. 

The theoretical aspects of quantum computation have been studied since the 1980s, and discoveries have been made that are of great importance for cryptography. The bad news is that,  using Shor's surprisingly efficient quantum algorithm for factorization, it is easy to break the widely used RSA public-key encryption schemes. The good news is there exist information-theoretically secure encryptions schemes that make use of quantum computation and communication to distribute keys. However,  when quantum computers first become usable, they will be scarce and expensive and honest parties who want to communicate securely will in general not have access to a quantum computer, while criminal organizations may have. In this situation it is important to use \emph{classical} cryptographic schemes proven to be secure against \emph{quantum} attacks. The research field that investigates the vulnerability of existing or new \emph{classical} schemes to quantum power, is called Post-Quantum Cryptography. A lot of (symmetric) schemes that are proven to be classically secure are expected to be \emph{quantum secure} as well, maybe under some additional conditions, but new proofs and/or proof techniques are required.\\

\noindent
Three important phenomena in quantum mechanics are \emph{superposition}, \emph{interference} and the \emph{collapse after measurements}. Superposition  means a system can be in more than one state at a time, each with some \emph{amplitude} (a complex coefficient). Due to interference, positive and negative amplitudes can cancel each other out. In quantum computing a qubit can be both 0 and 1 at the same time, but after a measurement the qubit collapses into a classical \textit{basis} state. The probability that the qubit is measured as 0 is the squared modulus of the amplitude of this state. The squared moduli of all amplitudes must sum up to one. A system or \textit{register} of $n$ qubits can be in $2^n$ states at the same time and algorithms on this register may compute a function on $2^n$ different inputs simultaneously. 
However, it is not possible to see the outcomes of all these parallel computations because the register will collapse to a single basis state after measurement. Only if interference can be used in a smart way, can the measurement 
tell us something about multiple inputs. 

There are only a few specific problems for which there exists a quantum algorithm solving it, which is much (exponentially) faster than any known classical algorithm that solves the problem.

In the field of quantum computing, people often make use of the circuit model of algorithms, which is, like the classical model, equivalent to the model of (quantum) Turing machines.

\section{Technical Framework}
\label{tech}
The state of a qubit can be written as two-dimensional vector 
$\left( \begin{array}{c}
\alpha\\
\beta\\
\end{array}
\right)$
in a Hilbert space, where $\alpha$ is the amplitude of basis vector 
$\left( \begin{array}{c}
1\\
0\\
\end{array}
\right)$
and $\beta$ the amplitude of 
$\left( \begin{array}{c}
0\\
1\\
\end{array}
\right)$.
$\alpha$ and $\beta$ are complex numbers and their squared absolute values sum up to 1. 
We use the more concise Dirac-notation, which is conventional in both quantum mechanics and quantum computing. In this notation, a column vector $\phi =  
\left( \begin{array}{c}
\alpha\\
\beta\\
\end{array}
\right)$
is written as \emph{ket} $|\phi\rangle = \alpha |0\rangle + \beta |1\rangle$ and a row vector $\psi$ as \emph{bra} $\langle \psi |$. The product \emph{bra $\cdot$ ket} $\langle \psi | \phi \rangle$ is the inner product of the two vectors, which corresponds with the widely used notation for inner product.

Every operation that can be applied on a quantum state preserves the following property of a quantum state: %in particular 
the sum of the squared amplitudes is 1. In terms of linear algebra, this means that operations or circuit gates on a quantum state can be written as unitary matrices left-multiplied with the state.

The tensor product $\otimes$ is used to combine multiple qubits or registers of qubits in one big quantum system. The set of possible classical states of a combined system is the Cartesian product of the state sets of each part.  

The combined state of two qubits $|\phi\rangle$ and $|\psi\rangle$ is written as $|\phi\rangle \otimes |\psi\rangle$ and sometimes abbreviated to $|\phi\rangle |\psi\rangle$, $|\phi , \psi\rangle$ or $|\phi \psi\rangle$. The bit string that arises in this way, in case of basis states, can be written as number between 0 and $2^n-1$ where $n$ is the length of the bit string.
Two parallel unitary transformations on two registers is equal to the tensor product of the operations applied on the tensor product of the registers.\\

\noindent
When we measure a register of $n$ qubits in the state $|\phi\rangle = \sum_{j=0}^{2^n-1} \alpha_j |j\rangle$, the probability of seeing the classical state $|j\rangle$ is $|\alpha_j|^2$. We say that the quantum state has \emph{collapsed} to the classical state $|j\rangle$ after the measurement. All other information the quantum state held is lost. This standard implementation is called \emph{measurement in the computational basis} but there are more possibilities. In general, measurements in any orthonormal basis can be described as a \emph{projective measurement}. For more information about this see for example the first chapter of Ronald de Wolf's PhD thesis \cite{Wol01} or the textbook by Nielsen and Chuang \cite{NC00}.

The circuit model for classical computation uses logical gates as NOT, AND, OR, XOR. There are different minimal sets of gates which generate all other gates. The same is true for quantum computing if we allow a negligible error probability.

The following quantum gates are commonly used:\\
 $$X=
\left( \begin{array}{cc}
0&1\\
1&0\\
\end{array}
\right) \ \ 
Z=
\left( \begin{array}{cr}
1&0\\
0&-1\\
\end{array}
\right) 
$$ 
Bitflip gate X is the quantum equivalent of NOT and Z is a phaseflip gate.\\
The Hadamard gate changes a classical state in a `very' quantum state: it sends both $|0\rangle$ and $|1\rangle$ to a uniform superposition over the two states. The phase flip on the amplitude of $|1\rangle$ makes the difference.\\
$$H= \frac{1}{\sqrt{2}}
\left( \begin{array}{cr}
1&1\\
1&-1\\
\end{array}
\right)
$$\\
  $$CNOT =
\left( \begin{array}{cccc}
1&0&0&0\\
0&1&0&0\\
0&0&0&1\\
0&0&1&0\\
\end{array}
\right)
$$ The controlled-not gate negates the second bit if the first bit is 1 and does nothing otherwise.\\

\section{Entanglement}
\label{secepr}
An important phenomenon that illustrates the counter-intuitive character of quantum mechanics is entanglement. Possibly separated particles can be related such that they seem to `communicate instantly' over any distance.  However, this is impossible because no matter or information can travel faster than light. Something is going on that looks like communication but it is something else. 

The following circuit creates two qubits being `fully entangled': an EPR-pair (named after a famous paper by Einstein, Podolsky and Rosen~\cite{EPR}).\\
\begin{figure}[h!]
\label{epr}
\begin{center}
\includegraphics[scale=1.5]{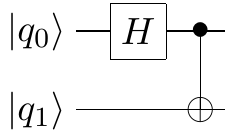}
\end{center}
\end{figure}\\
We calculate the state after this circuit, applied on $|00\rangle$. The Hadamard gate on the first bit transforms it to $\frac{1}{\sqrt{2}}(|0\rangle +|1\rangle)$. Now the CNOT gate flips the second bit to 1 if the first bit is 1. This means that either both bits are 1 or both bits are 0. The state we have is $\frac{1}{\sqrt{2}} ( |00\rangle + |11\rangle)$.
Suppose that Alice holds the first qubit and the second qubit is given to Bob who takes it to a place far away from Alice. If Alice now measures her qubit and sees (say) 1, then Bob's qubit collapses at the same moment: if Bob measures his qubit, he will see 1 as well.  This could not have happened as a causal relation, since it happened faster than anything sent by Alice could reach Bob.  This  contradicts to relativity theory unless (at least) one of the following two assumptions (which we tend to make intuitively) is dropped. 
\begin{itemize}
\item \emph{realism}: the physical properties that objects have are independent of observation.
\item \emph{locality}: measurements can not influence the outcome of other measurements from a distance.
\end{itemize}
Although Alice and Bob can share entanglement, they cannot send each other information faster than light. When Alice measures her qubit, she cannot choose to which state her or Bob's qubit will collapse. Neither can Bob.
\section{Query Model}
\label{querymodel}
Instead of time complexity, quantum algorithms are often analysed in terms of query complexity. If we know how many queries an algorithm needs to make to some function or random-access memory, we have a lower bound on the computing time as well. Each query costs one time step. 

The query model is used in cryptography as well. In the security definitions we assumed adversaries to have oracle access to the encryption, decryption or MAC-function. A lower bound on the number of queries an adversary needs to break a scheme immediately implies a lower bound on the time the break takes, so query complexity is useful in this field too.

The difference between the two fields is that cryptographers treat oracles as functions: the query is an element of the domain of the function and the answer is an element of the range, whereas in quantum computing oracles are usually memory of which the algorithm queries a bit by its index. It is not a big difference since a bit string is just a Boolean function, and any function can be represented as string. However, we have to be careful with the details. The main theorem in Chapter \ref{main} is about a cryptographic non-Boolean oracle, which is the reason why we have to prove a standard technique used in quantum computing for this general case (Proposition \ref{poly}). 

A quantum algorithm may query the oracle on a superposition of inputs.
There are different possibilities to implement an oracle answering a \emph{superposition query}. We describe two commonly used implementations and prove their equivalence.

\begin{proposition}
\label{querytypes}
Let $f\colon X \rightarrow Y$ be the function to which an algorithm has oracle access, $|X|= n,\ |Y| = m $.
The following two transformations on a register of qubits are equivalent formalizations of a single superposition query. The register consists of three parts. The first part $x$ is used for the query, $b$ for the answer and $w$ is workspace.
\begin{itemize}
\item addition query $O_f: |x,b,w \rangle \mapsto |x, b + f(x),w \rangle $ where $+$ is the operation of the group in which $b$ and $f(x)$ live, in this case addition modulo ${\M}$.
\item phase query $O_f^{\circ}: |x,b,w \rangle \mapsto \omega_{\M}^{b \cdot f(x)} |x,b,w \rangle$ and  $\omega_{\M}$ is the ${\M}$th complex root of unity $e^{2\pi \frac{i}{{\M}}}$.
\end{itemize}
\end{proposition}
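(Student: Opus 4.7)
The plan is to prove the two oracle implementations are equivalent by exhibiting explicit unitary circuits, acting only on the answer register $b$, that conjugate one oracle into the other. Since both $O_f$ and $O_f^\circ$ act trivially on the workspace $w$ and leave the query register $x$ unchanged, it suffices to verify the conjugation identity on a generic basis state $|x,b,w\rangle$ and then extend by linearity to arbitrary superposition queries.

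The tool is the quantum Fourier transform over $\mathbb{Z}_m$, defined by $F\colon |b\rangle \mapsto \frac{1}{\sqrt{m}}\sum_{y \in \mathbb{Z}_m} \omega_m^{by}|y\rangle$, which is a unitary on the $b$-register that can be implemented independently of $f$. I would carry out the following direct computation. Starting from $|x,b,w\rangle$, apply $F$ to the $b$-register to obtain $\frac{1}{\sqrt{m}}\sum_y \omega_m^{by}|x,y,w\rangle$; apply $O_f$ to get $\frac{1}{\sqrt{m}}\sum_y \omega_m^{by}|x,y+f(x),w\rangle$; after the substitution $y' = y+f(x)$ the phase becomes $\omega_m^{(y'-f(x))b}$, yielding $\omega_m^{-bf(x)}\cdot\frac{1}{\sqrt{m}}\sum_{y'} \omega_m^{by'}|x,y',w\rangle$; finally apply $F^{-1}$ to the $b$-register to recover $\omega_m^{-bf(x)}|x,b,w\rangle$. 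Thus $F^{-1}\,O_f\,F$ equals the phase oracle with conventions matched to $\omega_m^{-bf(x)}$; swapping $F$ and $F^{-1}$ (or equivalently re-labelling the character by $b\mapsto -b$) produces exactly the phase oracle $O_f^\circ$ as stated. This establishes one direction; the other direction is immediate since $F$ is unitary, so $O_f = F\,O_f^\circ\,F^{-1}$ (with the matching sign convention).

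To finish, I would note that the reduction uses only local unitaries on the $b$-register and a single call to the other oracle, so each simulates the other with exactly one query and negligible overhead; in particular an algorithm using $q$ phase queries can be converted into one using $q$ addition queries on the same input-output and workspace registers, and vice versa. I would remark that Boneh and Zhandry's implementation in \cite{BZ13a} corresponds to the addition form, while the quantum-polynomial-method arguments we invoke later in Chapter \ref{main} are most naturally phrased using the phase form, so this equivalence is exactly what licenses us to move between the two formalisms. The main obstacle, and really the only subtle point, is keeping the sign and orientation conventions of the QFT consistent with the definition of $\omega_m^{bf(x)}$; once the conventions are fixed the computation is a one-line change of variables.
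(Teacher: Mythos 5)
Your proof is correct and follows essentially the same route as the paper's: both directions are obtained by conjugating the addition oracle with the quantum Fourier transform on the answer register. The only differences are cosmetic --- you re-index the sum by $y' = y + f(x)$ to peel off the phase $\omega_m^{\pm b f(x)}$ directly, whereas the paper expands both transforms into a double sum and kills the cross terms with the orthogonality relation $\sum_{j=0}^{m-1}\omega_m^{jc}=0$, and you obtain the converse direction formally from unitarity of $F$ rather than repeating the computation.
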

\begin{proof}
We prove equivalence by giving two circuits that implement one formalization using the other.  These circuits show  that we get one type of queries by viewing the operation of the other type in a different basis, since the only thing we do is change to the `Fourier basis' to make the query and then change back to the computational basis.\\
We start with the implementation of a phase query using an addition query.\\

\noindent
\textbf{Circuit:}
\begin{itemize}
\item The inverse Fourier transform is applied on first register $|b\rangle$.
\item The transformation of the \emph{addition query} is applied on the total state.
\item The Fourier transform is applied on the first register.\\
\end{itemize}
By $F_{\M}$ we denote the ${\M}^{th}$ Fourier transform which is an $m$-by-$m$ matrix with entries $\frac{1}{\sqrt{{\M}}} \omega_{\M}^{jk}$ where $j$ is the row and $k$ the column of the entry. The inverse of the Fourier  transform $F_m^{-1}$ is defined similarly: each entry is $\frac{1}{\sqrt{{\M}}} \omega_{\M}^{-jk}$. From here on, we often omit the ${\M}$ if it is clear which root is meant.\\

\noindent
At the start of the circuit, the registers of the algorithm are in the following state:
\begin{align*}
&\sum_{x,b,w} \alpha_{xbw} |x,b,w\rangle
\\
\intertext{After the application of the inverse Fourier transform on $|b\rangle$ each basis state $|x,b,w \rangle$ becomes $\frac{1}{\sqrt{{\M}}} \sum\limits_{j=0}^{{\M}-1} \omega_{\M}^{-bj} |x,j,w\rangle$, so
the total state is:}
&\sum_{x,b,w} \alpha_{xbw} \frac{1}{\sqrt{{\M}}} \sum_{j=0}^{{\M}-1} \omega_{\M}^{-bj} |x,j,w\rangle
\\
\intertext{On this state the transformation of the \emph{addition query} is applied which results in the following state:}
&\sum_{x,b,w} \alpha_{xbw} \frac{1}{\sqrt{{\M}}} \sum_{j=0}^{{\M}-1} \omega_{\M}^{-bj} \ |x,j+f(x),w\rangle \hspace{2cm}
\end{align*}
Then after the Fourier transform we get:
\begin{align*}
 \sum_{x,b,w} \alpha_{xbw} \frac{1}{\sqrt{{\M}}} \sum_{j=0}^{{\M}-1} \omega_{\M}^{-bj} \ 
\frac{1}{\sqrt{{\M}}} \sum\limits_{k=0}^{{\M}-1} \omega_{\M}^{k(j+f(x))} |x,k,w\rangle
\\
= \frac{1}{{{\M}}} \sum_{x,b,w} \alpha_{xbw}  \sum_{j=0}^{{\M}-1}  
 \sum\limits_{k=0}^{{\M}-1} \omega_{\M}^{-bj} \omega_{\M}^{k(j+f(x))} |x,k,w\rangle
\\
= \frac{1}{{{\M}}} \sum_{x,b,w} \alpha_{xbw}  \sum_{j=0}^{{\M}-1}  
 \sum\limits_{k=0}^{{\M}-1} \omega_{\M}^{j(k-b)+k\cdot f(x)}
  |x,k,w\rangle
\\
= \frac{1}{{{\M}}} \sum_{x,b,w} \alpha_{xbw}  \sum_{k=0}^{{\M}-1}  
 \sum\limits_{j=0}^{{\M}-1} \omega_{\M}^{j(k-b)+k\cdot f(x)}
  |x,k,w\rangle 
\end{align*}
$$
= \frac{1}{{{\M}}} \sum_{x,b,w} \alpha_{xbw} \left( \sum_{\substack{k=0\\ k\neq b}}^{{\M}-1}  
 \sum\limits_{j=0}^{{\M}-1} \omega_{\M}^{j(k-b)+k\cdot f(x)}
  |x,k,w\rangle
  +  \sum\limits_{j=0}^{{\M}-1} \omega_{\M}^{j(b-b)+b\cdot f(x)}
  |x,b,w\rangle \right)
$$
We split the sum over $k$ into a single term $k=b$ and the sum over all other terms. The reason for this is that all terms $k\neq b$ have the form
$$\sum\limits_{j=0}^{{\M}-1} \omega_{\M}^{j(k-b)+k\cdot f(x)}
  |x,k,w\rangle  = 
  \sum\limits_{j=0}^{{\M}-1} \omega_{\M}^{jc+k\cdot f(x)}
  |x,k,w\rangle $$
  for some non-negative constant c.
By Proposition \ref{zerosum} (below), each of these sums over $j$ equals 0. 
So we are only left with the term $k=b$ which is exactly our definition of the \emph{phase query}:
$$   \sum\limits_{j=0}^{{\M}-1} \omega_{\M}^{j(b-b)+b\cdot f(x)}
  |x,b,w\rangle 
=  \sum\limits_{j=0}^{{\M}-1} \omega_{\M}^{b\cdot f(x)}
  |x,b,w\rangle 
$$
The second part of the proof is to show that we can implement the addition query using a phase query. The circuit is very similar to the previous one, namely the Fourier transform and its inverse have swapped places: the Fourier transform is applied before the query and the inverse Fourier transform after the query.
We start again with some arbitrary state: 
\begin{align*}
&\sum_{x,b,w} \alpha_{xbw} |x,b,w\rangle\\
\intertext{The Fourier transform is applied to the $b$-register of this state which turns it into:}
&\sum_{x,b,w} \alpha_{xbw} \frac{1}{\sqrt{{\M}}} \sum_{j=0}^{{\M}-1} \omega_{\M}^{bj} |x,j,w\rangle\\
\intertext{Then a query is made implemented as phase change, resulting in:}
&\sum_{x,b,w} \alpha_{xbw} \frac{1}{\sqrt{{\M}}} \sum_{j=0}^{{\M}-1} \omega^{bj} \omega^{j\cdot f(x)}|x,j,w\rangle\\
\end{align*}
And after the inverse Fourier transform we have the state:
\begin{align*}
&\sum_{x,b,w} \alpha_{xbw} \frac{1}{\sqrt{{\M}}} \sum_{j=0}^{{\M}-1} \omega^{bj} \omega^{j\cdot f(x)}
\frac{1}{\sqrt{{\M}}} \sum_{k=0}^{{\M}-1} \omega^{-jk}
|x,k,w\rangle
\\
= \frac{1}{{{\M}}} &\sum_{x,b,w} \alpha_{xbw}  \sum_{k=0}^{{\M}-1} \sum_{j=0}^{{\M}-1} \omega^{bj} \omega^{j\cdot f(x)}
  \omega^{-jk}
|x,k,w\rangle
\end{align*}
$$= \frac{1}{{{\M}}} \sum_{x,b,w} \alpha_{xbw}  \left( \sum_{\substack{k\neq 0\\k=b+f(x)}}^{{\M}-1} \sum_{j=0}^{{\M}-1} \omega^{j(b+f(x)-k)} |x,k,w\rangle
+ \sum_{j=0}^{{\M}-1} \omega^{j\cdot 0} 
|x,b+f(x),w\rangle \right)$$

Using  Proposition \ref{zerosum} we get:
\begin{align*}
\frac{1}{{{\M}}}& \sum_{x,b,w} \alpha_{xbw}  \left( \sum_{\substack{k\neq 0\\k=b+f(x)}}^{{\M}-1} 0
+ \sum_{j=0}^{{\M}-1} \omega^{0} 
|x,b+f(x),w\rangle \right)
\\
 = \frac{1}{{{\M}}}& \sum_{x,b,w} \alpha_{xbw} \cdot {\M} \ 
|x,b+f(x),w\rangle 
\\
 = & \sum_{x,b,w} \alpha_{xbw} \ 
|x,b+f(x),w\rangle 
\end{align*}
\qed
\end{proof}

\begin{proposition}
\label{zerosum}
Let $\omega_m = e^{2\pi \frac{i}{m}}$ for  $m\in \mathds{N}$ and let $c$ be a non-zero integer that is not a multiple of $m$ ($m\nmid c$). 
Then
$$
\sum_{j=0}^{{\M}-1} \omega_{\M}^{jc} =0
$$
\end{proposition}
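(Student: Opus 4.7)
The plan is to recognize the sum as a finite geometric series and apply the standard closed-form formula. Setting $r := \omega_m^c$, the sum becomes $\sum_{j=0}^{m-1} r^j$, which equals $\frac{r^m - 1}{r - 1}$ provided $r \neq 1$.

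First I would verify the two facts that make this work. For the numerator, $r^m = \omega_m^{cm} = e^{2\pi i cm/m} = e^{2\pi i c} = 1$ since $c \in \mathbb{Z}$, so $r^m - 1 = 0$. For the denominator being nonzero, I need $\omega_m^c \neq 1$, i.e.\ $e^{2\pi i c/m} \neq 1$; this holds exactly when $c/m \notin \mathbb{Z}$, which is precisely the hypothesis $m \nmid c$.

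Combining these, $\sum_{j=0}^{m-1} \omega_m^{jc} = \frac{0}{\omega_m^c - 1} = 0$, which is the claim. There is no real obstacle here—the only thing to be careful about is explicitly invoking the hypothesis $m \nmid c$ to justify dividing by $\omega_m^c - 1$, since without it the geometric-series formula is inapplicable (indeed, when $m \mid c$ every term equals $1$ and the sum is $m$, not $0$). For completeness one could also mention the geometric interpretation: the numbers $\omega_m^{jc}$ for $j = 0, \dots, m-1$ are the vertices of a regular polygon inscribed in the unit circle (possibly traversed multiple times, with multiplicity $\gcd(c,m)$), and the vertices of any nontrivial regular polygon centered at the origin sum to zero.
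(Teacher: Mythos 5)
Your proof is correct: the two hypotheses you isolate ($\omega_m^{cm}=1$ because $c\in\mathbb{Z}$, and $\omega_m^c\neq 1$ because $m\nmid c$) are exactly what is needed, and you correctly flag that the second is indispensable since the sum equals $m$ when $m\mid c$. The paper takes a slightly different route: rather than quoting the closed form $\frac{r^m-1}{r-1}$, it observes that multiplying the sum $S=\sum_{j=0}^{m-1}\omega_m^{jc}$ by $\omega_m^{c}$ merely shifts the index $j\mapsto j+1$ cyclically (using $\omega_m^{cm}=1$ to wrap the last term around to $j=0$), so $\omega_m^{c}S=S$ and hence $S=0$ since $\omega_m^{c}\neq 1$. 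This is really the standard derivation of the geometric-series identity specialized to roots of unity, presented as a rotational-symmetry argument about points evenly distributed on the unit circle; the paper itself notes that the result "follows directly from the well known formula for the sum of a geometric series" and gives the symmetry argument only for added insight. The two proofs rest on identical ingredients, so yours buys brevity while the paper's buys a geometric picture; your closing remark about the vertices of a regular polygon (traversed with multiplicity $\gcd(c,m)$) recovers that picture anyway.
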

\begin{proof}
It follows directly from well known formula for the sum of a geometric series, but it gives more insight to see the proof.
For $c=1$ 
one can easily see that the expectation ranges over exactly all distinct powers of omega on the unit circle in the complex plane. 
Viewing these points, it is intuitively clear that they sum op to 0 because they are evenly distributed on the circle. 
We can show this algebraically for arbitrary $c$ by showing that the set of points and thus their sum  stays the same if we rotate everything by an angle of $\frac{2\pi c}{{\M}}$.\\
\begin{align*}
e^{\frac{2\pi c}{{\M}} i} \cdot \sum_{j=0}^{{\M}-1} e^{2\pi c \frac{i}{{\M}}\cdot j} & = \sum_{j=0}^{{\M}-1} e^{2\pi c \frac{i}{{\M}}\cdot j} \cdot e^{\frac{2\pi c}{{\M}} i}\\
& = \sum_{j=0}^{{\M}-1} e^{2\pi c \frac{i}{{\M}}\cdot ( j+1)}\\
& = \sum_{k=1}^{{\M}} e^{2\pi c \frac{i}{{\M}}\cdot k}\\
& = \sum_{k=1}^{{\M}-1} e^{2\pi c \frac{i}{{\M}}\cdot k} + e^{2\pi c  \frac{i}{{\M}}\cdot {\M}}\\
& = \sum_{k=1}^{{\M}-1} e^{2\pi c \frac{i}{{\M}}\cdot k} + 1\\
& = \sum_{k=1}^{{\M}-1} e^{2\pi c \frac{i}{{\M}}\cdot k} + e^{2\pi c  \frac{i}{{\M}} \cdot 0 }\\
& = \sum_{k=0}^{{\M}-1} e^{2\pi c \frac{i}{{\M}}\cdot k}
\end{align*}
If a multiplication (of elements in a field) does not change an object, then either the object is multiplied by (the identity) one, or the object is zero.
Because $e^{\frac{2\pi c}{{\M}} i} \neq 1$ if $c$ is not a multiple of $m$, it follows that the sum must be 0.
\qed\\
 
\end{proof}

\newpage
\chapter{Quantum Security Models}
\label{qsecurity}
In this chapter we discuss different ways to model a quantum attacker.
In any case a \emph{quantum adversary} can use a quantum computer to run quantum algorithms. We know that for some public-key schemes, the key can be learned using a quantum algorithm. %leads to a broken scheme. 
When RSA is used for example, the adversary can compute the prime numbers (secret key) from their product (public key).
Of course quantum adversaries have access to the same sources of information or oracles as their classical counterparts. When we are reasoning about communication and oracle queries in situations in which there are classical computers as well as quantum computers, the communication can be quantum or classical. 
The choices we make for our security models depend on how adversaries in reality gain certain information. 

For example, in the random oracle model, the random oracle would in reality be replaced by some hash function which can be computed by the adversary  itself. Because we can never be sure how an adversary implements the hash function, we  assume in our model that the adversary can compute it in superposition. 
Analogously, 
the random oracle may be queried in superposition by the adversary. Boneh et al. \cite{BDFLSZ11} show that security in this \emph{quantum random-oracle model} is harder to prove for schemes in general. They even construct a scheme that is secure if the random-oracle is queried only classically, but insecure if  it is queried in superposition. 

Other oracles that occur in security definitions are the pseudorandom and uniformly-random functions in distinguishing games, encryption and decryption oracles, and signing oracles. In the line of random-oracle proofs, it is interesting to look at pseudorandom functions since they can be used to simulate a random oracle. In general, \emph{quantum secure} pseudorandom functions can be used for several things if one wants to have a conservative model with minimized limitations on the adversary's abilities. 
Zhandry \cite{Zha12b} shows that quantum-secure pseudorandom functions are needed to simulate quantum-accessible random oracles if the number of queries is not bounded in advance. 
Here \emph{quantum-accessible random oracle} means that the adversary can query the oracle on a superposition of states. This can be implemented in several ways as explained in Chapter \ref{quantum}. 
Equivalently, a PRF is quantum secure if it cannot be distinguished from a uniformly random function by an adversary making quantum queries to the oracle. 

Going further in giving adversaries superposition oracle access becomes more difficult when we consider encryption and authentication, because the security games are not trivial to translate. While having only the oracle to consider and no other input (except for the security parameter) in case of pseudorandom functions, it is getting more complex when we model encryption games in which adversaries have  communication  back-and-forth. Which inputs does the adversary get in superposition?

\section{Encryption}
Recall the security game of a chosen-plaintext attack on an encryption scheme. (Definition \ref{cpa})
\begin{enumerate}
\item $\Gen(1^n)$ generates a key $k$.
\item $\mathcal{A}$ gets input $1^n$, has oracle access to $\Enc_k$ and outputs a pair of messages $(m_0,m_1)$ of the same length.
\item A random bit $b$ is chosen. $\Enc(m_b) = c$ is given to $\mathcal{A}$.
\item If $\mathcal{A}$, which still has oracle access to $\Enc$, outputs the bit $b$ then it succeeds and the output of the game is 1, otherwise the output is 0.
\end{enumerate}
To turn this into a \emph{superposition-chosen-plaintext attack}, the first option is to make the oracle access to $\Enc_k$ quantum. That is, the adversary can query the oracle on a superposition of inputs, getting back a superposition of answers.

To see how useful this definition is, %?
we look at the real-life scenario this definition aims to model. Usually, in the field of post-quantum cryptography, it is assumed that the adversary has a quantum computer and the honest parties have only a classical machine. If both the sender and the receiver had quantum computers, they would be able to use quantum cryptography. However, it is possible that Alice, having a quantum computer, chooses to use a classical scheme to communicate to several others (maybe without a quantum computer) and %beter dan but?
has found an implementation of the scheme for her quantum computer. If the adversary finds a way to get the final superposition of this quantum encryption algorithm just before Alice measures it to send it classically, then the scenario fits with the adversary getting superposition answers. To be conservative, we assume that the adversary can choose any superposition to query in this way (however unrealistic this may sound). 
Assuming that the adversary can obtain the final superposition of the encrypted queries, we cannot always exclude that the adversary can do the same for all messages that are sent by Alice. We want to have the choice to make this assumption or not. Allowing Alice to see all superpositions of ciphertexts means that the challenge ciphertext $c_b$ would also be in some superposition.

Boneh and Zhandry \cite{BZ13b} formalize this (as a first try) in the following way, without much discussion about why this is a reasonable option.
\begin{definition} \cite[Definition 4.1]{BZ13b}
A private-key encryption scheme $\Pi = (\Gen, \Enc, \Dec)$ is \emph{indistinguishable under a fully quantum chosen-plaintext attack (IND-fqCPA secure)} if no efficient adversary $\mathcal{A}$ can win the following game, except with probability at most $\frac{1}{2} + \negl$:
\begin{enumerate}
\item A key $k$ is generated using $\Gen$ and a random bit $b$ is chosen.
\item $\mathcal{A}$ is allowed to make chosen-message queries on superpositions of message pairs.  For each such query, the challenger chooses randomness $r$, and encrypts the appropriate message in each pair using $r$ as randomness:
$$ \sum_{m_0,m_1,c} \psi_{m_0,m_1,c} |m_0,m_1,c\rangle \rightarrow \sum_{m_0,m_1,c} \psi_{m_0,m_1,c} |m_0,m_1,c \oplus \Enc_k(m_b;r)\rangle$$ 
\item $\mathcal{A}$ produces a bit and wins the game if the bit is equal to $b$.
\end{enumerate}
\end{definition}
Because every oracle answer contains information about $b$, this adversary is very powerful. Boneh and Zhandry prove that there cannot exist a scheme that satisfies this definition, because the message query is entangled with the answer.
They try to solve this problem by changing the implementation of the queries: both messages will be encrypted, but depending on $b$ the order is flipped in the answer. Unfortunately, this definition does not solve the problem: they prove that this definition is at least as strong as the first one.
They proceed with the same game we mentioned above. In this game \ref{INDqCPA}, the challenge ciphertext is one of $\Enc(m_0)$ and $\Enc (m_1)$ where the classic pair $(m_0,m_1)$ is chosen by the adversary.\\
\begin{definition} \cite[Definition 4.5]{BZ13b}
\label{INDqCPA}
A private-key encryption scheme $\Pi = (\Gen, \Enc, \Dec)$ is \emph{indistinguishable under a quantum chosen-plaintext attack (IND-qCPA secure)} if no efficient adversary $\mathcal{A}$ can win the following game, except with probability at most $\frac{1}{2} + \negl$:
\begin{enumerate}
\item A key $k$ is generated using $\Gen$, and a random bit $b$ is chosen.
\item $\mathcal{A}$ is allowed to make:
	\begin{itemize}
	\item \emph{challenge queries}: $\mathcal{A}$ sends a pair $(m_0,m_1)$ and gets back $c^*= \Enc(k,m_b)$.
	\item \emph{encryption queries}:  
	For each such query, the challenger chooses randomness $r$, and encrypts each message using $r$ as randomness:
$$ \sum_{m,c} \psi_{m,c} |m,c\rangle \rightarrow \sum_{m,c} \psi_{m,c} |m,c \oplus \Enc_k(m;r)\rangle$$ 
	\end{itemize}
\item $\mathcal{A}$ produces a bit and wins the game if the bit is equal to $b$.
\end{enumerate}
\end{definition}
%To find out 
What is reasonable to allow the adversary in order to model the real world? 
To answer this question, we recall the ideas that led to the traditional definition.
The adversary has access to the encryption oracle to model the ability of tricking an honest party into sending particular messages. We want to assume, in this new model, that this tricking can be done in superposition. It can be the case that a copy of an encryption device is in hands of the adversary. This device works as a black box, so  it can be run on superpositions of  messages without revealing the key.

The message pair $(m_0,m_1)$ of the classical distinguishing game is introduced to have two situations that 
the adversary can try to distinguish. This distinguishing ability should be equivalent or close to the ability of learning some information about the  messages Alice genuinely sends to Bob. To prove that whatever Alice wants to send, the adversary learns nothing (new) from the ciphertext, we let the adversary choose the pair (classically) in the game. We have to keep in mind that in reality there is no communication between the adversary and the sender; the communication with the \emph{challenger} exists in the model to ensure that the scheme is secure for any such pair. 
To cover all 
scenarios, we do want to assume that the adversary can always get the sender's superposition just before it is to be measured and sent.  
Note that the sender loses the quantum state in this case and the sending process is aborted. 
We do not discuss the question why the sender wants to send a measured superposition of messages; we just want to be general. Even if the definition is too strong for most uses, we want to have the choice to use it instead of weaker ones. 
Above all it is interesting to compare different models.

The semantic notion of security we want to achieve is the following: no adversary can learn anything about the superposition of messages the sender started with, from the superposition of ciphertexts the sender was about to measure and send. Our goal is to state this notion as distinguishing game.

We propose to give the adversary the task to distinguish between two superpositions of ciphertext. As in the classical case,  we let the adversary choose the two superpositions because it has to be safe for any message the sender may send. The formalization we use (Definition \ref{newdef}) is not, like Boneh and Zhandry do, the same as the implementation of the queries, in which the query itself is still part of the state afterwards. In our model the adversary sends a pair of superpositions over a quantum channel and receives one of the two superpositions of ciphertexts, depending on $b$. 
We require that the pair of superpositions is not entangled. 
This may seem questionable, but since the pair does not model any real state prepared by the adversary, it is reasonable to make this requirement.
Note that a special case is the situation in which the adversary chooses two classical states. 
Because there is no entanglement between different registers of the adversary,  we are (at least at first sight) not facing the problem that occurs with the mentioned option Boneh and Zhandry proposed.
It is easy to see that this definition is stronger than (what they call) \textit{IND-qCPA security} because the last one is a special case. Whether it is feasible at all and if so, whether it is strictly stronger than \textit{IND-qCPA security} is an interesting open question.
\begin{definition}
\label{newdef}
A private-key encryption scheme $\Pi = (\Gen, \Enc, \Dec)$ is \emph{indistinguishable under a superposition chosen-plaintext attack (IND-sCPA secure)} if no efficient adversary $\mathcal{A}$ can win the following game, except with probability at most $\frac{1}{2} + \negl$:
\begin{enumerate}
\item A key $k$ is generated using $\Gen$ and a random bit $b$ is chosen.
\item $\mathcal{A}$ is allowed to make \emph{encryption queries}.  
	For each such query, the challenger chooses randomness $r$, and encrypts each message using $r$ as randomness:
$$ \sum_{m,c} \psi_{m,c} |m,c\rangle \rightarrow \sum_{m,c} \psi_{m,c} |m,c \oplus \Enc_k(m;r)\rangle$$ 
\item $\mathcal{A}$ chooses two non-entangled superpositions of messages $(|m_0\rangle, |m_1\rangle)$ and sends them over a quantum channel to the challenger. (Non-entangled means in this case that after sending the pair, the adversary does not hold qubits that are entangled with $(|m_0\rangle \text{ or } |m_1\rangle)$. Also, $(|m_0\rangle \text{ and } |m_1\rangle)$ are not entangled with each other.)
\item The challenger chooses randomness $r^*$. The messages in $|m_b\rangle$ are encrypted using $r^*$ and sent to $\mathcal{A}$:
$$ |m_b\rangle = \sum_m \psi_m |m\rangle \rightarrow \sum_m \psi_m |\Enc_k(m;r^*)\rangle$$
\item $\mathcal{A}$, who has still access to the encryption oracle, produces a bit and wins the game if the bit is equal to $b$.
\end{enumerate}
\end{definition}

\section{Authentication}
\label{qmac}
Fortunately, defining quantum security for authentication is more straightforward.
\begin{definition}
\label{defqmac}
A MAC is \emph{existentially unforgeable under a quantum chosen message attack (EUF-qCMA)} if no adversary can win the following game, except with negligible probability.\\
\emph{Quantum MAC-forge game:}
\begin{enumerate}
\item  $\Gen(1^n)$ generates a random key.
\item $\mathcal{A}$ is given input $1^n$ and \emph{quantum} oracle access to $\Mac_k$. $\mathcal{A}$ wins if she outputs $q+1$ distinct pairs $(m,t)$ such that $\Vrfy_k(m,t)=1$, where $q$ is the number of oracle queries $\mathcal{A}$ made.
\end{enumerate}
The oracle uses the same randomness to generate the tags in one query. This is more conservative than assuming that new randomness is used for each tag and it is easier for the authenticator.
\end{definition}

\newpage
\newcommand{\mstar}{[m]^*}

\chapter{Quantum-Secure Authentication}
\label{main}
In this chapter we re-prove a theorem  first proved by Boneh and Zhandry and describe some implications of it.
\section{Main Result}
\label{6.1}
When we model the situation in which an adversary with quantum power is trying to forge a classical message authentication tag, we intuitively assume that the communication from the authenticating device is always classical.  However, as explained in Chapter \ref{qsecurity}, making this assumption is not the most conservative approach. It may be the case that the verifier or authenticating agents do have quantum computing power and measure their states just before communication. If we now consider an adversary who can somehow get access to the final quantum state, then 
we must allow her to make queries in superposition. 

Boneh and Zhandry~\cite{BZ13a} prove that it is possible to construct a MAC that is secure under \emph{quantum-chosen-message attacks} using a quantum-secure PRF. They develop a new technique for this purpose, which they call \emph{the rank method}. Using this method, they bound the probability that an adversary outputs  a number of input-output pairs of the oracle function which is bigger than the number of queries. 
A special case of the result is equivalent to a result by Farhi et al.~\cite{Far99}. In this case the oracle is a Boolean function and an upper bound is shown on the number of functions that can be distinguished by an algorithm making $q$ quantum queries. This upper bound is a function of the number of queries and the success probability $p$.

In the next section, we generalize the result and proof technique of the paper by Farhi et al.\ and thereby we re-prove Boneh and Zhandry's Theorem 4.1~\cite{BZ13a}. 
We believe that this new approach contributes to the insight into the strength of adversaries who are able to use both quantum computation and quantum communication.

Recall  the formal definition of a quantum secure MAC (Definition \ref{defqmac})
with a game between challenger and adversary in which the adversary is allowed to make $q$ quantum queries to the authentication oracle and her task is to output $k>q$ distinct message-tag  pairs. Theorem \ref{thm1} bounds the probability that an adversary wins this game if the tag function is chosen uniformly at random. To build a real secure MAC, this random function needs to be replaced by some pseudorandom function. Apart from this basic application, Boneh and Zhandry construct a quantum secure  variant of the Carter-Wegman MAC.

\begin{theorem}
\label{thm1}
Let $\mathcal{A}$ be a quantum algorithm making $q$ queries to a uniformly chosen oracle $f\colon X \rightarrow Y$ where $|X|= n,\ |Y|={\M}$. The probability that $\mathcal{A}$ successfully outputs $k>q$ distinct pairs $(x\in X,f(x)\in Y)$ is at most $\frac{1}{{\M}^k} \sum\limits_{i=0}^{q}  {k\choose i} ({\M}-1)^i$. 
\end{theorem}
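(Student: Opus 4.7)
The plan is to follow the quantum polynomial method of Beals et al., generalizing the Boolean argument of Farhi et al.\ to oracles with arbitrary output alphabet $Y$ of size $m$. First I would introduce indicator variables $y_{x,j}\in\{0,1\}$ for $x\in X$, $j\in Y$ with $y_{x,j}=1$ iff $f(x)=j$; a uniformly random $f$ corresponds to choosing, independently for each $x$, exactly one ``hot'' coordinate among the $y_{x,\cdot}$. Using the phase-query formalization from Proposition~\ref{querytypes}, the query $|x,b,w\rangle\mapsto\omega_m^{b\cdot f(x)}|x,b,w\rangle$ can be rewritten as $\sum_j y_{x,j}\,\omega_m^{b\cdot j}\,|x,b,w\rangle$, which is linear in the $y_{x,j}$; induction on the number of queries then shows that after $q$ queries every amplitude of $\mathcal{A}$'s final state is a polynomial of degree $\leq q$ in the $y_{x,j}$, so every measurement probability is a polynomial of degree $\leq 2q$, which the relations $y_{x,j}^2=y_{x,j}$ and $y_{x,j}y_{x,j'}=0$ ($j\neq j'$) make multilinear with at most one $y_{x,\cdot}$ per $x$ in each monomial.

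Next I would decompose the winning probability over candidate outputs. For each $k$-tuple $T=((x_1,t_1),\ldots,(x_k,t_k))$ of distinct pairs let $p_T(y)$ be the probability that $\mathcal{A}^f$ outputs exactly $T$; then $p_T$ is a polynomial of degree $\leq 2q$, and $\sum_T p_T\leq 1$ pointwise. Since $T$ is correct for $f$ iff $\prod_{j=1}^k y_{x_j,t_j}=1$,
\begin{equation*}
\Pr[\mathcal{A}^f\text{ wins}]\;=\;\sum_T p_T(y(f))\,\prod_{j=1}^k y_{x_j,t_j}.
\end{equation*}
Averaging over a uniform $f$, a monomial $\prod_{(x,j)\in S}y_{x,j}$ has expectation $m^{-|S|}$ when its $x$-coordinates are distinct and $0$ otherwise, so only monomials of $p_T$ consistent with $T$ survive after multiplication by $\prod_j y_{x_j,t_j}$, and the total winning probability reduces to $\tfrac{1}{m^k}\sum_T\alpha_T$ where $\alpha_T=\mathbb{E}[\,p_T(f)\mid f\text{ extends }T\,]$.

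The decisive step is then to show $\sum_T\alpha_T\leq\sum_{i=0}^{q}\binom{k}{i}(m-1)^i$, which would finish the argument. The strategy is to exploit the degree-$2q$ cap on the polynomials $p_T$ together with the global constraint $\sum_T p_T\leq 1$, extending Farhi et al.'s polynomial-method bookkeeping from the Boolean setting to the $Y$-valued one. The combinatorial quantity $\sum_{i=0}^{q}\binom{k}{i}(m-1)^i$ counts the functions on $k$ specified positions that differ from a reference function in at most $q$ places (choose $i\leq q$ disagreement positions out of $k$, then independently pick one of the $m-1$ alternative values at each), so the goal is to produce exactly this count on the polynomial side of the equation.

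The main obstacle will be porting Farhi et al.'s Boolean accounting to the $Y$-valued case while extracting the tight factor $(m-1)^i$ per disagreement rather than a looser $m^i$: one has to track the extra $m-1$ alternatives at each coordinate, while the degree-$2q$ bound still caps the number of ``unpredictable'' coordinates at $q$. Handling the relations $y_{x,j}y_{x,j'}=0$ cleanly during this accounting — so that monomial collisions do not inflate the count — is the most delicate part, and it is also the place where the sharpness of the bound (as opposed to a crude union-bound argument yielding $\binom{k}{q}m^{q}$) is won or lost.
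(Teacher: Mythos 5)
Your setup is sound, and your reduction of the theorem to the inequality $\sum_T \alpha_T \leq \sum_{i=0}^{q}\binom{k}{i}(m-1)^i$ is a correct reformulation (it even has the pleasant feature of handling oracle-dependent outputs automatically, whereas the paper needs a separate random-shift reduction for that case). But the proof stops exactly where the work begins: the ``decisive step'' is asserted as a strategy, not proved, and the strategy you describe --- combining the degree-$2q$ cap on the probabilities $p_T$ with the constraint $\sum_T p_T\leq 1$ --- points in a direction that will not produce the stated constant. A point-versus-average argument applied to polynomials of degree $2q$ on $k$ coordinates yields the dimension count $\sum_{i=0}^{2q}\binom{k}{i}(m-1)^i$, not $\sum_{i=0}^{q}\binom{k}{i}(m-1)^i$; moreover, for a nonnegative degree-$2q$ polynomial $p$ there is no direct Cauchy--Schwarz route from $p(f_0)$ to $\mathbb{E}[p]$. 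The factor-of-two loss in the degree is precisely what must be avoided, and nothing in your outline avoids it.

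The missing idea is to run the concentration argument at the level of \emph{amplitudes}, which have degree $q$, and only afterwards pass to probabilities. Concretely, the paper writes each final amplitude in the character basis $\prod_{x\in S}\omega_m^{f(x)\mathbf{b}(x)}$ with $|S|\leq q$ and $\mathbf{b}\colon S\to[m]^*$ (Proposition \ref{poly}); restricted to the $k$ output positions there are only $\sum_{i=0}^{q}\binom{k}{i}(m-1)^i$ such characters. Normalizing the amplitude so that it equals $1$ at the true oracle, Cauchy--Schwarz (Lemma \ref{beta}) lower-bounds the sum of squared Fourier coefficients by the reciprocal of that character count, and orthonormality of the characters under a uniformly random oracle (Lemma \ref{last}, a Parseval identity) converts that sum into the mean squared amplitude over oracles. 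Summing over outputs and oracles and using that the final state has norm one then delivers the bound. If you wish to keep your indicator-variable formulation, the statement you must supply is the analogue for each amplitude $\alpha_{T,w}(f)$: its squared value at any $f$ extending $T$ is at most $\bigl(\sum_{i=0}^{q}\binom{k}{i}(m-1)^i\bigr)$ times its mean square over the $m^k$ completions of $f$ on the $k$ output coordinates. Without a lemma of this kind the accounting you defer is not a technicality but the entire theorem.
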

To get a feeling for the size of the expression in Theorem \ref{thm1}, we note that if the sum would run from 0 to $k$, then it was equal to the binomial expression of $(m-1+1)^k$. Since $m>1$, every term is strictly positive; in particular the term $i=k$.
The factor $\frac{1}{m^k}$ ensures that the expression in the theorem is at most one if $q$ is smaller or equal to $k$, so it can be a probability. Because $q<k$ is required, the bound in Theorem \ref{thm1} is non-trivial (i.e.\ strictly smaller than 1).

We assume that the final state of the algorithm contains, when measured, a sequence $\xv$ of $k$ distinct elements of $X$ followed by a sequence of $k$ elements of $Y$. 
The rest of the register is called $w$ for working space. We write the final state of the algorithm as follows:
$$|\psi^f\rangle = \sum_z \alpha_z(f) |z\rangle = \sum_{x,y,w} \alpha_{xyw} |xyw\rangle$$
The success probability of the algorithm is the expectation (over the uniform distribution of oracles) of the probability that the first $k\cdot \log n$ qubits represent $k$ different $x$'s and the subsequent $k\cdot \log m$ qubits represent the sequence $f(\xv) = f(x_1)f(x_2)\cdots f(x_k)$ where $f$ is consistent with the particular oracle.
$$
\Pr[success] = \mathbb{E}_f\left[ \sum_{\xv\subseteq X, w} |\alpha_{\xv f(\xv)w}(f)|^2\right]$$
Like Boneh and Zhandry do, we assume that the sequence $\xv \subseteq X$ that the algorithm outputs does not depend on the oracle. After finishing the proof we argue that we can reduce the general case to the one with this assumption.\\

\noindent
The proof by Farhi et al.\ we will build upon makes use of the fact, proved by Beals et al.~\cite{Bea97}, that the amplitudes of the final states of any quantum algorithm are $2k$-degree multilinear polynomials in the values of the oracle-function, where $k$ is the number of queries. This is known as an essential part of \emph{the polynomial method} and is normally applied with Boolean inputs. The situation we model involves functions  with a larger codomain. We cannot represent our assumptions if we replace a query by multiple bitwise queries on a value since this will allow the adversary to query an arbitrary mixture of bits of function values. Therefore, we generalize the theorem by Beals et al.\ for functions with an arbitrary codomain $Y$. 
Proposition \ref{querytypes} shows that a superposition query implemented as a \emph{phase-changing} operation is equivalent to the commonly used one which adds the answer to the second register, so we can use the first implementation without changing the model. \emph{Oracle query}:
$$
|x,b,w \rangle \mapsto \omega_{\M}^{b \cdot f(x)} |x,b,w \rangle
$$
Here $\omega_{\M}$ is the ${\M}$th complex root of unity $e^{2\pi \frac{i}{{\M}}}$.

\section{Proof of Theorem \ref{thm1}}
With the following proposition, we mould the amplitudes of a state after $q$ queries into a nice polynomial form.
\begin{proposition}
\label{poly}
The amplitudes $\alpha_z$ in the state $\sum_z \alpha_z |z\rangle $ of any algorithm after making $q$ quantum queries to a function $f\colon X \rightarrow Y$ with $|Y|={\M}$, can be written as $$\sum_{\substack{S\subseteq X,  \\ \mathbf{b}\colon S \rightarrow \mstar}} \beta_{S,\mathbf{b}} \cdot \prod_{x \in S} \omega_{\M}^{f(x)\cdot \mathbf{b}(x)}$$ where $|S|\leq q$ and the function $\mathbf{b}$ is represented as set of input-output pairs and $[m]^* = \{1,2,\dots,m-1\}$.
\end{proposition}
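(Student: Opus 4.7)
The plan is to prove the proposition by induction on the number of queries $q$, tracking how the polynomial form of the amplitudes evolves under the two kinds of operations that comprise a quantum algorithm: $f$-independent unitaries applied between queries, and the oracle queries themselves. For the base case $q=0$, the state is completely independent of $f$, so each amplitude is a constant which is trivially of the required form with $S=\emptyset$ and the empty product equal to $1$.

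For the inductive step, I would first observe that an $f$-independent unitary $U$ applied to a state whose amplitudes are already of the claimed polynomial form produces new amplitudes that are $f$-independent linear combinations of the old ones; since the class of sums of the form $\sum_{S,\mathbf{b}}\beta_{S,\mathbf{b}}\prod_{x\in S}\omega_m^{f(x)\cdot\mathbf{b}(x)}$ with $|S|\le q$ is closed under linear combinations with $f$-independent coefficients, this preserves the invariant. Invoking Proposition \ref{querytypes}, I may model each query as a phase query, so that the query step simply multiplies the amplitude $\alpha_{x,b,w}$ of each basis state $|x,b,w\rangle$ by the scalar $\omega_m^{b\cdot f(x)}$. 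The inductive step therefore reduces to showing that multiplying a single term $\beta_{S,\mathbf{b}}\prod_{y\in S}\omega_m^{f(y)\cdot\mathbf{b}(y)}$ (with $|S|\le q$) by $\omega_m^{b\cdot f(x)}$ yields another term of the same shape with subset size at most $q+1$.

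This final step is the main obstacle and requires careful bookkeeping in three subcases. If $b=0$, the extra factor is $1$ and nothing changes. If $b\neq 0$ and $x\notin S$, the new factor extends the product by adjoining $x$ to $S$ with value $\mathbf{b}(x)=b\in[m]^*$, so $|S|$ grows by one. If $b\neq 0$ and $x\in S$, the two exponents at $x$ combine to $\mathbf{b}(x)+b\bmod m$: if this is nonzero we simply update $\mathbf{b}(x)$ and leave $S$ unchanged; if it is zero, the factor at $x$ becomes $1$ and we must remove $x$ from $S$ in order to maintain the constraint $\mathbf{b}:S\to[m]^*$, which actually shrinks $|S|$ by one. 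In every subcase the resulting subset has size at most $q+1$ and the requirement that $\mathbf{b}$ only takes values in $[m]^*$ is preserved. Summing over the inductive decomposition and collecting like terms then closes the induction; the subsequent $f$-independent unitary preserves the form by the earlier observation, completing the argument.
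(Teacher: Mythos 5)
Your proposal is correct and follows essentially the same route as the paper's own proof: induction on the number of queries, using the phase-query formulation from Proposition \ref{querytypes}, the same three-way case analysis of how the factor $\omega_m^{b\cdot f(x)}$ merges into an existing term, and closure of the polynomial form under the $f$-independent unitaries between queries. Your explicit $b=0$ subcase is a small but welcome refinement, since the paper's case split ($x_0\notin S$ versus $x_0\in S$) silently assumes $b_0\neq 0$ when arguing that $\mathbf{b}$ keeps its values in $[m]^*$.
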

\begin{proof} 

We prove this lemma by induction on the number of queries made by the algorithm. \\
\emph{Base}: if no queries have been made the sum in the lemma is taken over only one pair $\{S, \mathbf{b}\}$, namely $\{\emptyset, \emptyset \}$. The single term in this sum is a constant times an empty product so just the constant (as an empty product is 1). Amplitudes do not depend on the oracle if no queries are made so they are indeed constant. The base case holds.\\
\emph{Induction Hypothesis}: assume that the lemma holds for $q=t$; then the amplitudes of any algorithm after making $t$ quantum queries (each followed by a unitary) to a function $f$ can be written as $$\sum_{\substack{S\subseteq X,\\ \mathbf{b}\colon S \rightarrow  \mstar}} \beta_{S,\mathbf{b}} \cdot \prod_{x \in S} \omega_{\M}^{f(x)\cdot \mathbf{b}(x)}$$ where $|S|\leq t$. We compute the state after one more query.\\ 
\emph{Step}:
the basis state $|x_0b_0w_0 \rangle$ has amplitude $\sum_{S,\mathbf{b}} \beta_{S,\mathbf{b}} \cdot \prod_{x \in S} \omega_{\M}^{f(x)\cdot \mathbf{b}(x)}$ after the unitary transform that follows the $t^{th}$ query. The next query only changes the global phase of a each state: each amplitude $\alpha_{x_0b_0w_0} \text{ becomes }  \alpha_{x_0b_0w_0}\cdot \omega^{f(x_0)\cdot b_0}$. We express the new amplitude in terms of the old:
$$\alpha_{x_0b_0w_0} = 
\left( \sum_{\substack{S\subseteq X,  \\ \mathbf{b}\colon S \rightarrow  \mstar}} \beta_{S,\mathbf{b}} \cdot \prod_{x \in S} \omega_{\M}^{f(x)\cdot \mathbf{b}(x)} \right)
\cdot
\omega^{f(x_0)\cdot b_0}$$ Here $|S|\leq t$.\\
Each term of the sum is multiplied by $\omega^{f(x_0)\cdot b_0}$ which transforms the sum in one of the following ways:
\begin{itemize}
\item If $x_0 \not \in S$ then the product is now over $x\in S\cup \{x_0\}$. For the sum it means that the term $(S,\textbf{b})$ disappears and a new term $(S\cup\{x_0\}, \mathbf{b}\cup \{(x_0,b_0)\}$ with the same coefficient $\beta_{S, \mathbf{b}}$ is added. Note that $|S|$ now can be $t+1$.
\item If $x_0 \in S$ and $b(x_0) + b_0 \neq 0 $ then the term $(S, \mathbf{b})$ is replaced by the term $(S, \mathbf{b}')$ where $\mathbf{b}'(x_0) = [\mathbf{b}(x_0)+b_0 \mod {\M}]$. Again, the new term gets the coefficient of the disappearing term.
\item If $x_0 \in S$ and $\mathbf{b}(x_0) + b_0 \equiv 0 \pmod {\M}$ then this term is replaced by a term $(S\backslash \{x_0\}, \mathbf{b} \backslash \{(x_0,\mathbf{b}(x_0)\})$.

\end{itemize}
If we added or changed more than one term with the same pair $\{S, \mathbf{b} \}$, we sum the coefficients to make it a single term. By the induction hypothesis, each $\mathbf{b}(x) \text{ is in } \mstar$ before the step. We added these numbers modulo ${\M}$, and when the sum was equivalent to 0, the $x$ disappeared from the set $S$. This means that the new $\mathbf{b}(x)$ are all in $\mstar$.

The amplitudes are of the same form as before the last query, now with $|S|\leq t+1$. The number of terms in the sum did not increase. The next part of the induction step is the application of some unitary. After multiplying the amplitude vector by a unitary matrix, each amplitude is a linear combination of the amplitudes after the query. This linear combination can be written in the same form, with new coefficients. 
Now, the number of terms of each sum can be much bigger: at most the number of pairs of a set $S$ of size at most $t+1$ together with a function $\mathbf{b}$. 
The number of distinct pairs $(S, \mathbf{b})$ is $\sum_{i=0}^{k}  {|X|\choose i} ({\M}-1)^i$. For each size $i$ of set $S$ there are ${|X| \choose i}$ possible sets, each of which can come with $({\M}-1)^i$ possible functions $\mathbf{b}$. Here we see a glimpse of the expression of Theorem \ref{6.1}.

The induction step shows that if the amplitudes resulting from any algorithm making $t$ queries have the claimed form with $|S|\leq t$, then an algorithm making $t+1$ queries has amplitudes of the same form with $|S|\leq t+1$. By induction, the final state of any algorithm making $q$ queries has amplitudes of the claimed form with $|S|\leq q$. \qed

\end{proof}
Now we can start to bound the success probability of a MAC forger. A couple of lemmas lead to Theorem \ref{thm1}.
\begin{lemma}
\label{beta}
Let $f\colon X \rightarrow Y$ be a function, $|X|=n,\ |Y|=m$ and let $\alpha: Y^X \rightarrow \mathbb{C}$ be a multivariate function of the form $$\alpha(f) = \sum\limits_{\substack{S\subseteq X,  \\ \mathbf{b} \colon S \rightarrow \mstar}} \beta_{S,\mathbf{b}} \cdot \prod_{x \in S} \omega_{\M}^{f(x)\cdot \mathbf{b}(x)}$$ 
where $|S|\leq q$.\\
If $\exists f$ such that $\alpha(f) =1$ then the following holds:
$$\sum_{S\subseteq X,  \ \mathbf{b} \colon S \rightarrow \mstar} |\beta_{S,\mathbf{b}}|^2 \geq \frac{1}{\sum\limits_{i=0}^{q}  {n\choose i} (m-1)^i}\ .$$
\end{lemma}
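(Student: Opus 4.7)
The plan is to interpret $\alpha(f)$ as an inner product of two complex vectors and apply the Cauchy--Schwarz inequality. Specifically, index the coordinates by pairs $(S,\mathbf{b})$ with $S\subseteq X$, $|S|\leq q$, and $\mathbf{b}\colon S\to [m]^*$. Let $v$ be the vector with entries $\beta_{S,\mathbf{b}}$ and let $u(f)$ be the vector with entries $\prod_{x\in S}\omega_m^{f(x)\cdot \mathbf{b}(x)}$. Then $\alpha(f)=\langle v, \overline{u(f)}\rangle$ (up to conjugation which does not matter), so Cauchy--Schwarz gives $|\alpha(f)|^2\leq \|v\|^2\cdot \|u(f)\|^2$.

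Next, I would compute $\|u(f)\|^2$ explicitly. Each entry of $u(f)$ is a product of roots of unity, so $|\prod_{x\in S}\omega_m^{f(x)\cdot\mathbf{b}(x)}|^2=1$, and hence $\|u(f)\|^2$ is simply the number of admissible index pairs $(S,\mathbf{b})$. Counting by the size $i=|S|$: there are $\binom{n}{i}$ subsets of that size, and for each such $S$ exactly $(m-1)^i$ functions $\mathbf{b}\colon S\to[m]^*$. Summing over $i=0,1,\dots,q$ yields $\|u(f)\|^2=\sum_{i=0}^{q}\binom{n}{i}(m-1)^i$.

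Finally, I plug in the hypothesis. Since some $f$ makes $\alpha(f)=1$, we have $1=|\alpha(f)|^2\leq \|v\|^2\cdot \sum_{i=0}^{q}\binom{n}{i}(m-1)^i$, and rearranging gives $\sum_{S,\mathbf{b}}|\beta_{S,\mathbf{b}}|^2=\|v\|^2\geq \frac{1}{\sum_{i=0}^{q}\binom{n}{i}(m-1)^i}$, exactly the claimed bound.

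There is essentially no obstacle here beyond being careful with bookkeeping: the main technical points are (i) checking that the indexing set of pairs $(S,\mathbf{b})$ used in $v$ and $u(f)$ agrees with the one coming from Proposition \ref{poly} (it does, by construction), and (ii) counting the coordinates correctly, where the restriction $\mathbf{b}(x)\in [m]^*=\{1,\dots,m-1\}$ is crucial to get the factor $(m-1)^i$ rather than $m^i$. Both are routine once the Cauchy--Schwarz framing is chosen.
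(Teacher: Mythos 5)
Your proposal is correct and is essentially the same argument as the paper's: both apply Cauchy--Schwarz to the sum defining $\alpha(f)$, use that each product of roots of unity has unit modulus so the second factor is just the number of index pairs $(S,\mathbf{b})$, and count those pairs as $\sum_{i=0}^{q}\binom{n}{i}(m-1)^i$. The inner-product phrasing is only a cosmetic difference.
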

\begin{proof}
We assume that $\alpha$ is of the stated form and $\alpha(\hat{f}) =1$.
\begin{align*}
\alpha({\hat{f}}) = & \sum_{S\subseteq X,  \ \bfn \colon S \rightarrow \mstar} \beta_{S,\bfn} \cdot \prod_{x \in S} \omega_{\M}^{\hat{f}(x)\cdot \bfn(x)} =1 \\
\end{align*}
By the Cauchy-Schwarz inequality we have:
$$|1|^2= \mid \sum_{S, \bfn} \beta_{S,\bfn} \cdot \prod_{x \in S} \omega_{\M}^{\hat{f}(x)\cdot \bfn(x)}\mid^2 \leq 
\sum_{S, \bfn} \mid \beta_{S,\bfn} \mid^2 \cdot \sum_{S, \bfn} |\prod_{x \in S} \omega_{\M}^{\hat{f}(x)\cdot \bfn(x)}|^2 $$
Since the norm of any power of omega is 1, this gives:
$$1= | \sum_{S, \bfn} \beta_{S,\bfn} \cdot \prod_{x \in S} \omega_{\M}^{\hat{f}(x)\cdot \bfn(x)}|^2 \leq 
\sum_{S, \bfn} |\beta_{S,\bfn}|^2 \cdot \sum_{S, \bfn} 1$$
We divide by the sum over ones (which is equal to the number of terms of the sum):
$$\sum_{S, \bfn} |\beta_{S,\bfn}|^2 \geq \frac{1}{\# (S,\bfn)}$$
The number of terms in the sum, $\# (S,\bfn)$ can be counted by summing over all sizes of $S$. For each size $i$ there are $n \choose i$ subsets of $X$ and for each subset there are $(m-1)^i$ functions $\bfn$ from $S$ to $\mstar$.  $$\# (S,\bfn) = \sum_{i=0}^{q} {n \choose i} (m-1)^i$$
This proves the lemma.  \qed
\end{proof}

\begin{lemma}
\label{last}
Let 
$\alpha(g)$ a function of the form: $$\sum\limits_{S\subseteq X,  \ \bfn \colon S \rightarrow \mstar} \beta_{S,\bfn} \cdot \prod\limits_{x \in S} \omega_{\M}^{g(x)\cdot \bfn(x)}$$
for $g: X\rightarrow Y$. Then:
$$\mathbb{E}_g \left[|\alpha(g)|^2\right] = \sum_{S,\bfn}\left|\beta_{S,\bfn}\right|^2$$
where $g$ is chosen uniformly from the set of all functions from $X$ to $Y$.
\end{lemma}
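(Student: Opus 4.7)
The plan is to expand $|\alpha(g)|^2 = \alpha(g)\overline{\alpha(g)}$ as a double sum over pairs $(S,\mathbf{b})$ and $(S',\mathbf{b}')$, push the expectation over $g$ inside, and then exploit independence of the values $\{g(x)\}_{x\in X}$ together with Proposition \ref{zerosum} to kill all off-diagonal terms.

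Concretely, I would first write
\[
|\alpha(g)|^2 = \sum_{S,\mathbf{b}}\sum_{S',\mathbf{b}'} \beta_{S,\mathbf{b}}\,\overline{\beta_{S',\mathbf{b}'}}\prod_{x\in S}\omega_m^{g(x)\mathbf{b}(x)}\prod_{x\in S'}\omega_m^{-g(x)\mathbf{b}'(x)},
\]
and collect the two products into a single product over $x\in S\cup S'$ of $\omega_m^{g(x)\mathbf{c}(x)}$, where $\mathbf{c}(x)=\mathbf{b}(x)$ on $S\setminus S'$, $\mathbf{c}(x)=-\mathbf{b}'(x)$ on $S'\setminus S$, and $\mathbf{c}(x)=\mathbf{b}(x)-\mathbf{b}'(x)$ on $S\cap S'$. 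Since $g$ is uniform on $Y^X$, the values $g(x)$ are independent and uniform on $[m]$, so
\[
\mathbb{E}_g\Bigl[\prod_{x\in S\cup S'}\omega_m^{g(x)\mathbf{c}(x)}\Bigr]=\prod_{x\in S\cup S'}\frac{1}{m}\sum_{j=0}^{m-1}\omega_m^{j\,\mathbf{c}(x)}.
\]

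The main step is then the diagonalization: by Proposition \ref{zerosum}, each inner factor equals $1$ if $\mathbf{c}(x)\equiv 0\pmod m$ and $0$ otherwise. Using crucially that $\mathbf{b},\mathbf{b}'$ take values in $[m]^*=\{1,\dots,m-1\}$ (so no value is $\equiv 0\pmod m$), any $x\in S\setminus S'$ contributes $\mathbf{c}(x)=\mathbf{b}(x)\not\equiv 0$, forcing $S\setminus S'=\emptyset$; symmetrically $S'\setminus S=\emptyset$, so $S=S'$. On $S=S'$ one needs $\mathbf{b}(x)-\mathbf{b}'(x)\equiv 0\pmod m$, and since both lie in $\{1,\dots,m-1\}$ this forces $\mathbf{b}=\mathbf{b}'$. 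Every surviving diagonal term contributes $\beta_{S,\mathbf{b}}\overline{\beta_{S,\mathbf{b}}}=|\beta_{S,\mathbf{b}}|^2$, yielding the claim.

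The only subtle point is the last one: the restriction of $\mathbf{b}$ to $[m]^*$ (rather than all of $[m]$) is exactly what makes the parametrization of $\alpha$ unique, so that distinct $(S,\mathbf{b})$ give orthogonal basis functions $g\mapsto\prod_{x\in S}\omega_m^{g(x)\mathbf{b}(x)}$ under $\mathbb{E}_g[\overline{(\cdot)}\,(\cdot)]$; this is why the convention $\mathbf{b}\colon S\to[m]^*$ was built into Proposition \ref{poly}. I do not expect any further obstacle, as the rest is the standard Plancherel/Parseval identity for the characters of the abelian group $(\mathbb{Z}/m\mathbb{Z})^X$.
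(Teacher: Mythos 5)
Your proof is correct and follows essentially the same route as the paper: expand $|\alpha(g)|^2$ as a double sum over pairs $(S,\mathbf{b})$, $(S',\mathbf{b}')$, push the expectation over $g$ inside, and use Proposition \ref{zerosum} together with the restriction $\mathbf{b}\colon S\to[m]^*$ to annihilate every off-diagonal term, leaving $\sum_{S,\mathbf{b}}|\beta_{S,\mathbf{b}}|^2$. The only cosmetic difference is that you factor the expectation completely over the independent coordinates $g(x)$, whereas the paper isolates a single coordinate $x_d$ where the two index pairs differ and conditions on the values of $g$ elsewhere; both reduce to the same character-orthogonality computation.
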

\begin{proof}
\begin{align*}
\mathbb{E}_g \left[|\alpha(g)|^2\right]& = \mathbb{E}_g \left[|\sum_{S,\bfn} \beta_{S\bfn} \cdot \prod_{x\in S} \omega^{g(x)\bfn(x)}|^2 \right]\\
& = \mathbb{E}_g\left[ \sum_{(S_1,\bfn_1),(S_2,\bfn_2)} \beta_{S_1\bfn_1} \overline{\beta_{S_2\bfn_2}} \prod_{x\in S_1} \omega^{g(x)\bfn_1(x)} \overline{\prod_{x'\in S_2} \omega^{g(x')\bfn_2(x')}}\right]
\end{align*}
Since $\beta_{S\bfn}$ is independent from $g$ we can move the expectation  in front of the products after swapping it with the sum which we may do by linearity.
\begin{equation}
\label{expect}
\mathbb{E}_g \left[|\alpha(g)|^2\right] = \sum_{\substack{(S_1,\bfn_1)\\(S_2,\bfn_2)}} \beta_{S_1\bfn_1} \overline{\beta_{S_2\bfn_2}} \  \mathbb{E}_g\left[ \prod_{x\in S_1} \omega^{g(x)\bfn_1(x)} \overline{\prod_{x'\in S_2} \omega^{g(x')\bfn_2(x')}}\right]
\end{equation}
We distinguish three different kinds of terms in this sum: terms with $S_1 \neq S_2$, terms with $S_1 = S_2$ and $\bfn_1 \neq \bfn_2$, and terms with $(S_1,\bfn_1)=(S_2,\bfn_2)$. Terms of the first two kinds can be partitioned in groups that sum to 0, which leaves only the terms of the last type.

\begin{itemize}
\item case $S_1\neq S_2$\\
Without loss of generality we can assume that $\exists x_d$ such that $x_d \in S_1,\ x_d\not \in S_2$. Now we fix values $g(x)$ for all $x\neq x_d$. Let $G_d$ be the set of functions that are consistent with the fixed values. In this set there is exactly one function for each possible value $g(x_d)$. 
\begin{align*}
\mathbb{E}_{G_d} \left[ \prod_{x\in S_1} \omega^{g(x)\bfn_1(x)}  \overline{\prod_{x\in S_2} \omega^{g(x)\bfn_2(x)}} \right]& = \\
\mathbb{E}_{G_d} \left[ \omega^{g(x_d)\bfn_1(x_d)}\prod_{x\neq x_d} \omega^{g(x)\bfn_1(x)}  \overline{\prod_{x\in S_2} \omega^{g(x)\bfn_2(x)}} \right]&
\end{align*}
The powers of $\omega$ in the product are the same for all functions in $G_d$, so the products do not depend on $g$. We can view the big products as a constant $a$ and take it out of the expectation.
$$a \cdot \mathbb{E}_{g\in G_d} \left[\omega^{g(x_d)\bfn(x_d)}\right]
= a \cdot \mathbb{E}_{y\in Y} \left[ \omega_m^{y\bfn(x_d)}\right]$$
By Lemma \ref{zerosum} and because $Y = [m]$, this is 0.\\
We can partition all functions in $Y^X$ in parts with $g(x)$ fixed for $x\neq x_d$. Each part has expected value 0 for the product so the expected value over all functions is also 0.

\item case $S_1=S_2$ and $\bfn_1\neq \bfn_2$\\
$\exists x_d$ such that $\bfn_1(x_d)\neq \bfn_2(x_d)$.
Again we partition all functions $g$ such that the functions in one part have the same values on $x\neq x_d$.

\begin{align*}
\mathbb{E}_{G_d} \left[\prod_{x\in S_1} \omega^{g(x)\bfn_1(x)} \overline{\prod_{x\in S_2} \omega^{g(x)\bfn_2(x)}}\right] \\=
\mathbb{E}_{G_d} \left[\omega^{g(x_d)\bfn_1(x_d)} \overline{\omega^{g(x_d)\bfn_2(x_d)}} \prod_{x\neq x_d} \omega^{g(x)\bfn_1(x)} \overline{\prod_{x\neq x_d} \omega^{g(x)\bfn_2(x)}}\right]
\\=
\mathbb{E}_{G_d} \left[\omega^{g(x_d)(\bfn_1(x_d)-\bfn_2(x_d))}  \prod_{x\neq x_d} \omega^{g(x)\bfn_1(x)} \overline{\prod_{x\neq x_d} \omega^{g(x)\bfn_2(x)}}\right]
\end{align*}
In this equation the big products, as well as $(\bfn_1(x_d)- \bfn_2(x_d))$, are constants, so again the expected value is 0 by Lemma \ref{zerosum}.

\item case $(S_1,\bfn_1)=(S_2,\bfn_2)$
\begin{align*}
\mathbb{E}_{g} \left[\prod_{x\in S_1} \omega^{g(x)\bfn_1(x)} \overline{\prod_{x\in S_2} \omega^{g(x)\bfn_2(x)}}\right] \\=
\mathbb{E}_{g} \left[\prod_{x\in S_1} \omega^{g(x)\bfn_1(x)} \overline{\prod_{x\in S_1} \omega^{g(x)\bfn_1(x)}}\right] \\=
\mathbb{E}_{g} [1]
\end{align*}
\end{itemize}
In the sum of equation \ref{expect}, 
only the terms remain with $(S_1,\bfn_1)=(S_2,\bfn_2)$ and these terms consist of the two coefficients multiplied by 1:
$$\mathbb{E}_g [|\alpha(g)|^2] = \sum_{(S_1,\bfn_1)=(S_2,\bfn_2)} \beta_{S_1\bfn_1} \overline{\beta_{S_2\bfn_2}} \cdot 1=
\sum_{(S,\bfn)} |\beta_{S\bfn}|^2 
$$

\qed
\end{proof}
As explained in Section \ref{6.1}, the final state of any quantum algorithm making $q$ oracle queries to $f \in Y^X$ can be written as 
$$|\psi^f\rangle = \sum_z \alpha_z(f) |z\rangle = \sum_{x,y,w} \alpha_{xyw} |xyw\rangle$$
and the probability that the adversary succeeds is
\begin{equation}
\label{defp}
\Pr[success] = \mathbb{E}_f\left[\sum_{\substack{\xv\subseteq X,\\ w}} |\alpha_{\xv f(\xv)w}(f)|^2\right]=p.
\end{equation}
Here, $\xv$ has the required length and $f$ is uniformly chosen from $Y^X$.
Since it is a quantum  state, the squares of the coefficients $\alpha_{xyw}$ add up to (at most) one, for a fixed $f$.\\

\noindent
Fix $\xv \subset X$, $w$ and $f$. Having this fixed $\xv$ we can split each function $g = g_0\cup g_1$ where $g_0$ has domain $\xv$ and $g_1$ acts on the complement $X\backslash \xv$. We now define
\begin{equation}
\label{alphaprime} 
\alpha'_{\xv f w}(g_0):= \frac{\alpha_{\xv f(\xv) w}(g_0\cup f_1)}{\alpha_{\xv f(\xv) w}(f)}
\end{equation} 
which satisfies $\alpha'_{\xv f w}(f_0)=1$.

\begin{corollary}
\label{cor}
For any fixed $\xv \subset X$ of size $k$, $w$ and $f$ the following holds:
$$\mathbb{E}_g [|\alpha'_{\xv fw}(g_0)|^2]\geq \frac{1}{\sum\limits_{i=0}^{q} {k \choose i} ({\M}-1)^i}$$ 
\end{corollary}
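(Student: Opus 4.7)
The plan is to combine Proposition \ref{poly}, Lemma \ref{beta}, and Lemma \ref{last} applied to the restricted function $\alpha'_{\xv fw}$ viewed as a polynomial in $g_0 \in Y^{\xv}$. First I would recall from Proposition \ref{poly} that for a fixed output register $(\xv, f(\xv), w)$, the amplitude
\[
\alpha_{\xv f(\xv) w}(g) \;=\; \sum_{\substack{S\subseteq X,\\ \mathbf{b}\colon S\to \mstar}} \beta_{S,\mathbf{b}} \prod_{x\in S} \omega_{\M}^{g(x)\cdot \mathbf{b}(x)},\qquad |S|\le q.
\]
Since $g = g_0 \cup f_1$ with $g_0$ acting on $\xv$ and $f_1$ fixed on $X\setminus \xv$, I would split each set $S$ as $S = S_0 \sqcup S_1$ with $S_0\subseteq \xv$ and $S_1\subseteq X\setminus \xv$; the factor $\prod_{x\in S_1}\omega^{f_1(x)\mathbf{b}(x)}$ is then a constant depending only on $(S,\mathbf{b})$ and $f_1$, and collecting these constants as $\gamma_{S_0,\mathbf{b}|_{S_0}}$ yields
\[
\alpha_{\xv f(\xv) w}(g_0\cup f_1) \;=\; \sum_{\substack{S_0\subseteq \xv,\\ \mathbf{b}\colon S_0\to \mstar}} \gamma_{S_0,\mathbf{b}} \prod_{x\in S_0} \omega_{\M}^{g_0(x)\cdot \mathbf{b}(x)},
\]
still with $|S_0|\le q$.

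Next I would divide by the nonzero scalar $\alpha_{\xv f(\xv) w}(f)$ (assuming it is nonzero; if it is zero the quantity $\alpha'$ is undefined and we can work with the contrapositive) to obtain $\alpha'_{\xv fw}(g_0)$ of exactly the polynomial form of Lemma \ref{beta}, now with the ambient input set $\xv$ of size $k$ in place of $X$ of size $n$. By construction, $\alpha'_{\xv fw}(f_0) = 1$, so Lemma \ref{beta} applies with $n$ replaced by $k$ and gives
\[
\sum_{S_0,\mathbf{b}} |\beta'_{S_0,\mathbf{b}}|^2 \;\ge\; \frac{1}{\sum_{i=0}^{q}\binom{k}{i}(\M-1)^i},
\]
where $\beta'_{S_0,\mathbf{b}} := \gamma_{S_0,\mathbf{b}}/\alpha_{\xv f(\xv) w}(f)$ are the coefficients of $\alpha'_{\xv fw}$.

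Finally, I would invoke Lemma \ref{last} applied to the polynomial $\alpha'_{\xv fw}$ over functions $g_0\colon \xv \to Y$. Since averaging over uniformly random $g$ with $g_0$ uniform on $\xv$ (independent of $g_1$, which does not enter $\alpha'$) coincides with averaging $\alpha'_{\xv fw}$ over uniform $g_0$, Lemma \ref{last} yields
\[
\mathbb{E}_g\bigl[|\alpha'_{\xv fw}(g_0)|^2\bigr] \;=\; \sum_{S_0,\mathbf{b}} |\beta'_{S_0,\mathbf{b}}|^2.
\]
Chaining this with the previous inequality gives the claimed lower bound. The main obstacle I anticipate is the bookkeeping in the first step: one must verify carefully that after fixing $f_1$, the reindexed coefficients $\gamma_{S_0,\mathbf{b}}$ remain well-defined (possibly after summing contributions from multiple $(S,\mathbf{b})$ that collapse to the same $(S_0,\mathbf{b}|_{S_0})$) and that the constraint $|S_0|\le q$ is preserved, so that Lemma \ref{beta} may legitimately be applied with the smaller ambient set $\xv$ rather than $X$.
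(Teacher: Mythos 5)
Your proposal is correct and follows essentially the same route as the paper: both decompose each monomial index set $S$ into its intersection with $\xv$ and its complement, absorb the constant factors $\prod_{x\in S\setminus\xv}\omega^{f_1(x)\mathbf{b}(x)}$ and the scalar $\alpha_{\xv f(\xv)w}(f)$ into new coefficients (summing contributions that collapse to the same $(S_0,\mathbf{b}|_{S_0})$), and then combine Lemma \ref{last} with Lemma \ref{beta} applied over the ambient set $\xv$ of size $k$. The only difference is cosmetic (the order in which the two lemmas are invoked), and your explicit remark about the nondegeneracy of $\alpha_{\xv f(\xv)w}(f)$ is a point the paper leaves implicit.
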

\begin{proof}
We will show that we can apply the previous two lemmas and that $|X|$ in the binomial coefficient becomes $k$. Firstly, we write $\alpha'$ as we defined it (\ref{alphaprime}).
$$\mathbb{E}_g [|\alpha'_{\xv fw}(g_0)|^2]= \mathbb{E}_g [|\frac{\alpha_{\xv f(\xv)w}(g_0\cup f_1)}{\alpha_{\xv f(\xv)w}(f)}|^2]$$
By definition, $\alpha'$ does not depend on $g_1$ so we may as well take the expectation according to the uniform distribution over $g_0$ instead of $g$. 
$$\mathbb{E}_{g_0} [|\alpha'_{\xv fw}(g_0)|^2]= \mathbb{E}_{g_0} [|\frac{\alpha_{\xv f(\xv)w}(g_0\cup f_1)}{\alpha_{\xv f(\xv)w}(f)}|^2]$$
We have shown  (Proposition \ref{poly}) that the numerator is of the polynomial form required to use Lemma \ref{last}. The denominator is a constant since we fixed $\xv$, $w$ and $f$. Each coefficient of $\alpha_{\xv f(\xv)w}(g_0\cup f_1)$ is divided by a constant, which means that the form is preserved. So $\alpha'_{\xv fw}(g_0)$ is of the form $\sum\limits_{S\subseteq X,  \ b\colon S \rightarrow \mstar} \beta'_{S,b} \cdot \prod\limits_{x \in S} \omega_{\M}^{g_0\cup f_1(x)\cdot b(x)}$ where $\beta'_{Sb} = \frac{\beta_{Sb}}{\alpha_{\xv f(\xv)w}(f)}$
and $|S|\leq q$.\\ 
We can rewrite this polynomial a bit since $g_0\cup f_1 (x) = f_1(x) $ is constant if $x\not \in \xv$. Each constant power of $\omega$ can be moved to the coefficient which leaves us with new coefficients $\beta''_{Sb}$. The products are over $x\in (S\cap \xv)$ so we can now combine the terms that multiply over the same intersection. This results again in new coefficients $\beta'''_{S\cap\xv ,b}$.\\ 
Now by Lemma \ref{last}: $\mathbb{E}_g [|\alpha'(g_0)|^2] = \sum\limits_{S \subseteq \xv,\ b \colon \xv \rightarrow \mstar }|\beta'''_{S,b}|^2$ where still $|S|\leq q$.\\
Because $\exists f$, namely our fixed $f$, such that $\alpha'(f)=1$, by Lemma \ref{beta} we have
\begin{align*}
\sum_{S\subseteq \xv,  \ b\colon S \rightarrow \mstar} |\beta'''_{S,b}|^2 & \geq \frac{1}{\sum\limits_{i=0}^{q}  {|\xv|\choose i} ({\M}-1)^i}
\intertext{So} \mathbb{E}_{g_0} [|\alpha'(g_0)|^2] & \geq \frac{1}{\sum\limits_{i=0}^{q}  {k\choose i} ({\M}-1)^i} 
\end{align*}
\qed
\end{proof}
To prove Theorem \ref{thm1}, we use our definition of $\alpha'$ in the inequality of Corollary \ref{cor}.
\begin{align*}
\mathbb{E}_{g_0} [|\frac{\alpha_{\xv f(\xv)w(g_0\cup f_1)}}{\alpha_{\xv f(\xv)w}(f)}|^2]  \geq \frac{1}{\sum\limits_{i=0}^{q}  {k\choose i} ({\M}-1)^i}
\end{align*}
The oracle function the adversary has access to is chosen uniformly at random from $Y^X$, and the restriction of $g$ to $\xv$ is uniformly chosen from $Y^{\xv}$. Therefore we can replace the expectation by a summation and divide by the number of functions $|Y|^{|\xv|}={\M}^k$.
\begin{align*}
\sum_{g_0\in Y^{\xv}} [|\frac{\alpha_{\xv f(\xv)w(g_0\cup f_1)}}{\alpha_{\xv f(\xv)w}(f)}|^2]  \geq \frac{{\M}^{k}}{\sum\limits_{i=0}^{q}  {k\choose i} ({\M}-1)^i}
\end{align*}
The constant $\alpha_{\xv f(\xv)w}(f)$ can be taken out of sum after the squared norm   is taken of the numerator and the denominator separately. If we then take the denominator to the right hand side  we get:
\begin{align*}
\sum_{g_0\in Y^{\xv}} |\alpha_{xf(\xv)w}(g_0 \cup f_1)|^2 & \geq \frac{{\M}^{k} \cdot |\alpha_{\xv f_0w}(f)|^2}{\sum\limits_{i=0}^{q}  {k\choose i} ({\M}-1)^i}
\intertext{Up to this point we worked with fixed $\xv$, $f$ and $w$. Now we sum over all possible choices.}
\sum_f \sum_{\xv} \sum_w \sum_{g_0} |\alpha_{\xv f(\xv)w}(g_0 \cup f_1)|^2 & \geq 
\frac{{\M}^{k} \cdot \sum_f \sum_\xv \sum_w |\alpha_{\xv f(\xv)w}(f)|^2}{\sum\limits_{i=0}^{q}  {k\choose i} ({\M}-1)^i}
\intertext{On the left-hand side we pull the sum over $g$ outside the new sums, and we split the sum over $f$ into sums over $f_0$ and $f_1$. }
\sum_{g_0} \sum_{f_1} \sum_{w, \xv, f(\xv)}  |\alpha_{\xv f(\xv)w}(g_0 \cup f_1)|^2 & \geq 
\frac{{\M}^{k} \cdot \sum_f \sum_\xv \sum_w |\alpha_{\xv f(\xv)w}(f)|^2}{\sum\limits_{i=0}^{q}  {k\choose i} ({\M}-1)^i}
\end{align*}
%\intertext
{The sum over $w$, $\xv$ and $f(\xv)$ of the squared amplitudes with a fixed oracle function, is at most 1 since the amplitudes are in the final state of $\mathcal{A}^{g_0\cup f_1}$.
The sum over all possible oracles $g_0\cup f_1$ is at most $|Y^X|={\M}^{n}$.\\
On the right-hand side we replace the sum over $f$ by the expectation over the uniform distribution of $f$ multiplied by the number of functions.}

%\begin{align*}
\[
{\M}^{n} \geq 
\frac{{\M}^{k} \cdot {\M}^{n} \cdot \mathbb{E}_f[ \sum\limits_{\xv} \sum\limits_w |\alpha_{\xv f(\xv)w}(f)|^2]}{\sum\limits_{i=0}^{q}  {k\choose i} ({\M}-1)^i}
\]
The expectation over $f$ is exactly $p$, the success probability of the adversary (\ref{defp}).
$$1 \geq \frac{{\M}^k\cdot p}{\sum\limits_{i=0}^{q}  {k\choose i} ({\M}-1)^i}$$
This is equivalent to:
$$p \leq \frac{1}{{\M}^k} \sum\limits_{i=0}^{q}  {k\choose i} ({\M}-1)^i$$
%\end{align*}
This finishes the proof for the special case in which the sequence $\xv$, consisting of the input parts of the input-output pairs (generated by the adversary in the game) is independent of the oracle. In other words, the case in which the adversary decides in advance 
for which inputs she will give the function values.

To cover the general case, we now assume that there exists an adversary $\mathcal{A}$ violating Theorem \ref{thm1}, who outputs a sequence $\xv$ that depends on the oracle $f$. We use this adversary to construct a new algorithm $\mathcal{B}$ that successfully outputs $k>q$ distinct pairs $(x,f(x))$ with the same probability as $\mathcal{A}$ but with the difference that the sequence $\xv$ is independent of the oracle.
$\mathcal{B}$ picks a random oracle $O$ with the same domain $X$ and range $Y=[m]$ as $f$. It simulates $\mathcal{A}$ giving it the oracle $f+O$ defined by $f+O(x) := f(x) + O(x) \mod m$. 
$\mathcal{A}$ sees a random oracle and succeeds with probability greater than the bound from Theorem \ref{thm1}. The sequence $\xv$ of $\mathcal{A}$'s output may depend on $f+O$ but is independent of $f$ and $O$ which are both independent of their sum and not seen by $\mathcal{A}$. $\mathcal{B}$ can translate the pairs given by $\mathcal{A}$ to input-output pairs of the function $f$ by subtracting $O(x)$ from each $y$. The probability that $\mathcal{A}$'s pairs are distinct and correct is equal to the probability that $\mathcal{B}$'s pairs are distinct and correct. 
With a sequence $\xv$ as output that is independent from its oracle, $\mathcal{B}$ belongs to the case for which we proved Theorem \ref{thm1}. 
It follows that the success probability of $\mathcal{B}$ is bounded by Theorem \ref{thm1} and therefore the same is true for $\mathcal{A}$'s success probability. Our assumption is false and the theorem holds in the general case.
\qed

\section{Applications of Theorem \ref{thm1}}
The bound in Theorem \ref{thm1} is a complex term and it may be difficult to feel what this bound tells us. Towards the application in the security game of a message authentication code, we consider the case in which $k=q+1$. We then have a bound on the success probability of 
$$\frac{1}{{\M}^{q+1}} \sum\limits_{i=0}^{q}  {q+1\choose i} ({\M}-1)^i
=\frac{1}{{\M}^k} \left(\sum\limits_{i=0}^{q+1}  {q+1\choose i} ({\M}-1)^i -(m-1)^{q+1}\right)
$$
Using Newton's binomial theorem we can replace the sum in this expression by $(1+(m-1))^{q+1} = m^{q+1}$.
The bound becomes the following:
$$\frac{1}{{\M}^{q+1}} (m^{q+1}-(m-1)^{q+1}) = 1-\frac{(m-1)^{q+1}}{m^{q+1}} = 1- (1-\frac{1}{m})^{q+1}$$
By induction on $q+1$, $(1-\frac{1}{m})^{q+1} \geq 1- \frac{q+1}{m}$. Base: $(1-\frac{1}{m})^{1} \geq 1- \frac{1}{m}$. 
Induction step: assume $(1-\frac{1}{m})^{n} \geq 1- \frac{n}{m}$, then $(1-\frac{1}{m})^{n+1} \geq (1- \frac{n}{m})(1-\frac{1}{m}) = 1-\frac{n}{m}-\frac{1}{m}+\frac{n}{m^2}\geq 1- \frac{n+1}{m}$.\\
The bound of Theorem \ref{6.1} is thus at most  $ \frac{q+1}{n}$.\\

\noindent
An immediate consequence of Theorem \ref{thm1} is the following: a quantum secure pseudorandom function can serve as a MAC that is secure against superposition attacks. 
The existence of quantum pseudorandom functions, that means PRFs that are indistinguishable from a random function by adversaries making superposition queries (assuming that classical pseudorandom functions exist), is proved by Zhandry \cite{Zha12b}.
\begin{theorem}
Let $f: K \times X \rightarrow Y$ be a quantum pseudorandom function. If $\frac{1}{|Y|}$ is negligible in $n$  then $f_k(m)$ is a EUF-qCMA-secure MAC as defined in Section \ref{qmac}.
\end{theorem}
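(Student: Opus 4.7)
The approach is the standard PRF-to-MAC reduction, made rigorous in the quantum setting by combining the pseudorandomness of $f$ with Theorem~\ref{thm1}. Concretely, if some efficient adversary breaks EUF-qCMA security of $f_k$, I use it to distinguish $f$ from a uniformly random function under superposition queries, which contradicts the assumption that $f$ is a quantum PRF.

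First, suppose for contradiction that there is an efficient adversary $\mathcal{A}$ which, making $q=q(n)$ superposition queries to the oracle $\Mac_k(\cdot)=f_k(\cdot)$, produces $q+1$ distinct valid message-tag pairs with non-negligible probability $\varepsilon(n)$. I construct a distinguisher $\mathcal{D}^{g}$ that, given superposition oracle access to an unknown function $g\colon X\to Y$, simulates the quantum MAC-forge game by routing every one of $\mathcal{A}$'s superposition queries directly through $g$. When $\mathcal{A}$ halts with claimed pairs $(m_1,t_1),\dots,(m_{q+1},t_{q+1})$, $\mathcal{D}$ issues $q+1$ \emph{classical} queries to $g$ (classical queries being a special case of superposition queries), computes each $g(m_i)$, and outputs $1$ iff the $m_i$ are pairwise distinct and $t_i=g(m_i)$ for every $i$.

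Next, I would analyse the two cases of the PRF distinguishing game. When $g=f_k$ for uniformly chosen $k$, $\mathcal{D}^{f_k}$ outputs $1$ precisely when $\mathcal{A}$ wins the forge game, so $\Pr[\mathcal{D}^{f_k}(1^n)=1]=\varepsilon(n)$. When $g$ is uniformly random in $Y^X$, the event ``$\mathcal{D}$ outputs $1$'' is exactly the event that $\mathcal{A}$, after $q$ superposition queries to $g$, returns $q+1$ distinct correct input-output pairs $(x,g(x))$; this is the setting of Theorem~\ref{thm1}, and the $k=q+1$ simplification derived at the start of Section~\ref{6.1}'s applications bounds it by $(q+1)/|Y|$. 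Since $q$ is polynomial in $n$ and $1/|Y|$ is assumed negligible in $n$, so is $(q+1)/|Y|$. The distinguishing advantage of $\mathcal{D}$ is therefore at least $\varepsilon(n)-(q(n)+1)/|Y|$, which is non-negligible, contradicting quantum pseudorandomness of $f$ and completing the proof.

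The only delicate point to flag is the status of $\mathcal{D}$'s extra $q+1$ verification queries. These do swell $\mathcal{D}$'s own query count to $g$, but they are \emph{not} queries of $\mathcal{A}$: Theorem~\ref{thm1} bounds the probability that $\mathcal{A}$ outputs $q+1$ correct pairs as a function of the number of queries $\mathcal{A}$ itself makes \emph{before} producing its output, and whether or not a separate party afterwards verifies correctness is irrelevant to that bound. Efficiency of $\mathcal{D}$ is immediate from efficiency of $\mathcal{A}$ plus polynomially many extra classical queries and equality tests, so no further obstacle arises.
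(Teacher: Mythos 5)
Your proof is correct and follows essentially the same route as the paper: bound the forger's success against a uniformly random oracle by $(q+1)/|Y|$ via Theorem~\ref{thm1} (specialized to $k=q+1$), then transfer this bound to $f_k$ using the quantum pseudorandomness of $f$. The paper compresses the reduction into one sentence (``$f_k$ is indistinguishable from a random function so the difference between the success probabilities is negligible''), whereas you make the distinguisher explicit --- including the $q+1$ classical verification queries and the observation that they do not affect the applicability of Theorem~\ref{thm1} --- which is a more rigorous rendering of the same argument rather than a different one.
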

\begin{proof}
Any adversary $\mathcal{A}$ making polynomially many queries ($q$) to the MAC oracle can win the EUF-qCMA game only with probability $\frac{q}{|Y|}$ if the oracle is a random function. This probability is negligible in the length of the tags $\log(|Y|)$ and the length of the messages.
$f_k$ is indistinguishable from a random function so the difference between the success probability in the game with the random function as MAC and that of the game with $f_k$ as MAC is negligible. This means that $\mathcal{A}$ can only win the game in which the MAC oracle is $f_k$ with negligible probability.
\qed
\end{proof}
Boneh and Zhandry \cite{BZ13a} show that a modification of the Carter-Wegman MAC is secure against superposition attacks. Next, they consider one-time MACs and more general $q$-time MACs. They show that unlike classically, a ($q$+1)-wise independent function is not enough to build a secure $q$-time MAC, but a ($c+2q$)-wise independent function does ensure security. 

In their most recent paper \cite{BZ13b} they go further with giving a construction to turn any classically secure MAC into a quantum secure one by combining it with a \emph{chameleon hash function}. They also use a generalization of Theorem \ref{thm1} to prove that the GPV signature scheme built from pre-image sampleable functions and PRFs is quantum secure.

\newpage				%voor printen
\chapter{Conclusion and Further Research}
\label{conclusion}
Not all traditional security definitions have a trivial counterpart in the quantum world. It is therefore interesting and insightful to study several options for modelling reality. It is not always needed nor practical to obtain the most conservative security notion covering the broadest class of attacks, but it is important to know whether reasonable-looking definitions are feasible at all. 

Comparing the new definition we proposed in this thesis (Definition \ref{newdef}) to other options is interesting for further research. It is not unlikely that our definition can be proved 
equivalent to the definition Boneh and Zhandry  suggest. 
Another question is how the real world relates to these definitions. The current state of the development of physical quantum computers can be used to make more specific assumptions concerning quantum adversaries.  

%\noindent
Our alternative proof of Boneh and Zhandry's theorem \cite{BZ13a} helps in understanding the (limitations on) advantages an adversary has from getting information in superposition. The new proof exposes a relation between research questions in post-quantum cryptography and existing results from the field of quantum computing. Specifically, we generalized and followed the lines of the proof by Farhi et al.\ \cite{Far99} of the theorem that states the following: any adversary making $q$ quantum queries to an oracle holding one of a set of Boolean functions cannot identify the function with probability at least $p$ if the set is of size $ > \frac{1}{p} [1+ {n \choose 1} + {n \choose 2} + \cdots + {n \choose q} ] $, where $n$ is the size of the domain of the functions. In our case, the oracle may hold any function (not only Boolean) and the goal of the adversary is not necessarily to know exactly which function it is, but to output $k$ input-output pairs. In any non-trivial case we require $k>q$.

%\noindent
The quantum-secure MACs we discussed in Chapter \ref{main} are an applications of \emph{quantum pseudorandom functions}, as defined and constructed by Zhandry~\cite{Zha12b}. 
Another application could be constructing \emph{quantum pseudorandom permutations}. 
Classical pseudorandom permutations can be built from pseudorandom functions using for example a Feistel network. These techniques are widely used as block ciphers such as DES. Whether the classical Luby-Rackoff construction \cite{LR88} can be shown secure against quantum attacks 
is an interesting open question. Naor and Reingold \cite{NR99} gave a clear proof of the correctness of the classical construction, which can be a starting point for further research on quantum-secure cryptography.

With still some time before our real adversaries build or buy powerful quantum computers, we already have various results that contribute to the confidence in classical cryptography in a quantum world.  
The new paradigm in which there is quantum communication between the honest parties and the adversary has created new research questions. The quantum-secure MACs we discussed is an example, and there are other related questions to study.

\bibliography{thesisbMVelema}{}
\bibliographystyle{amsalpha}
\end{document}